\DeclareMathOperator{\im}{im}
\DeclarePairedDelimiter{\dpdparens}{\lparen}{\rparen}
\NewDocumentCommand{\parens}{s o m} {
	\IfBooleanTF{#1}
	{\dpdparens{#3}} %
	{\IfNoValueTF{#2} {\dpdparens*{#3}} {\dpdparens[#2]{#3}}}
}
\newcommand{\half}{\ensuremath{\frac{1}{2}}}
\newcommand{\ceil}[1]{\left\lceil#1\right\rceil}
\newcommand{\floor}[1]{\left\lfloor#1\right\rfloor}
\newcommand{\pmat}[1]{\begin{pmatrix}#1\end{pmatrix}}
\newtheorem{theorem}{Theorem}
\newtheorem{lemma}{Lemma}
\newtheorem{corollary}{Corollary}
\theoremstyle{definition}
\newtheorem{definition}{Definition}
\theoremstyle{remark}
\DeclareMathOperator{\CNOT}{CNOT}
\DeclareMathOperator{\fSWAP}{fSWAP}
\DeclareMathOperator{\SWAP}{SWAP}
\DeclareMathOperator{\CZ}{CZ}
\DeclareMathOperator{\MOD}{MOD}
\DeclareMathOperator{\wt}{wt}
\DeclareMathOperator{\hd}{hd}
\newcommand{\hc}{\mathrm{h.c.}}
\newcommand{\JW}{\mathrm{JW}}
\newcommand*{\bigo}[1]{O\parens{#1}}
\begin{document}

\title{Simulating fermions with exponentially lower overhead without ancillas}
\title{Low-depth fermion routing without ancillas}
\title{Low-depth fermion routing in product-preserving ternary-tree encodings}
\title{Low-depth fermion routing without ancillas}

\author{Nathan~Constantinides$^{*}$}
\affiliation{Joint Center for Quantum Information and Computer Science, NIST/University of Maryland, College Park, Maryland 20742, USA}
\affiliation{Joint Quantum Institute, NIST/University of Maryland, College Park, Maryland 20742, USA}
\author{Jeffery~Yu$^{*}$}
\affiliation{Joint Center for Quantum Information and Computer Science, NIST/University of Maryland, College Park, Maryland 20742, USA}
\affiliation{Joint Quantum Institute, NIST/University of Maryland, College Park, Maryland 20742, USA}
\author{Dhruv~Devulapalli}
\affiliation{Joint Center for Quantum Information and Computer Science, NIST/University of Maryland, College Park, Maryland 20742, USA}
\affiliation{Joint Quantum Institute, NIST/University of Maryland, College Park, Maryland 20742, USA}
\author{Ali~Fahimniya}
\affiliation{Joint Center for Quantum Information and Computer Science, NIST/University of Maryland, College Park, Maryland 20742, USA}
\affiliation{Joint Quantum Institute, NIST/University of Maryland, College Park, Maryland 20742, USA}
\author{Luke Schaeffer}
\affiliation{Joint Center for Quantum Information and Computer Science, NIST/University of Maryland, College Park, Maryland 20742, USA}
\affiliation{Institute for Quantum Computing, University of Waterloo}
\author{Andrew~M.~Childs}
\affiliation{Joint Center for Quantum Information and Computer Science, NIST/University of Maryland, College Park, Maryland 20742, USA}
\affiliation{Department of Computer Science and Institute for Advanced Computer
Studies, University of Maryland, College Park, Maryland 20742, USA}
\author{Michael~J.~Gullans}
\affiliation{Joint Center for Quantum Information and Computer Science, NIST/University of Maryland, College Park, Maryland 20742, USA}
\affiliation{Department of Computer Science and Institute for Advanced Computer
Studies, University of Maryland, College Park, Maryland 20742, USA}
\author{Alexander~Schuckert$^{\dagger}$}
\affiliation{Joint Center for Quantum Information and Computer Science, NIST/University of Maryland, College Park, Maryland 20742, USA}
\affiliation{Joint Quantum Institute, NIST/University of Maryland, College Park, Maryland 20742, USA}
\author{Alexey~V.~Gorshkov$^{\dagger}$}
\affiliation{Joint Center for Quantum Information and Computer Science, NIST/University of Maryland, College Park, Maryland 20742, USA}
\affiliation{Joint Quantum Institute, NIST/University of Maryland, College Park, Maryland 20742, USA}

\date{\today}
\def\thefootnote{*}\footnotetext{These authors contributed equally to this work.}
\def\thefootnote{$\dagger$}\footnotetext{These authors contributed equally to the supervision of this work. \href{mailto:gorshkov@umd.edu}{gorshkov@umd.edu}, \href{mailto:schuckertalexander@gmail.com}{schuckertalexander@gmail.com}}

\begin{abstract}
Routing is the task of permuting qubits in such a way that quantum operations can be parallelized maximally, given constraints on the hardware geometry. When simulating fermions in the Jordan-Wigner encoding with qubits, a one-dimensional nearest-neighbor-connected geometry is effectively imposed on the system, independently of the underlying hardware, which means that naively, an $O(N)$ depth routing overhead is incurred. Recently, Maskara et al.~[\href{https://arxiv.org/abs/2509.08898}{arXiv:2509.08898}] demonstrated that this routing overhead can be reduced to $O(\log N)$ by decomposing general fermion routing into $O(\log N)$ interleave permutations of depth $O(1)$, using $\Theta(N)$ ancillary qubits and employing measurements and feedforward. Here, we exhibit an alternative construction that achieves the same asymptotic performance. 
We also generalize the result in two ways. Firstly, we show that fermion routing can be performed in depth $O(\log^2 N)$ \emph{without} ancillas, measurements, or feedforward. Secondly, we construct  efficient mappings with $O(\log^2 N)$ depth between all product-preserving ternary tree fermionic encodings, thereby showing that fermion routing in any such encoding can be done efficiently. While these results assume all-to-all connectivity, they also imply upper bounds for fermion routing in devices with limited connectivity by multiplying the fermion routing depth by the worst-case qubit routing depth.

\end{abstract}
\maketitle

\section{Introduction}
\label{sec:intro}

Standard quantum computers based on qubit architectures suffer from locality constraints. These constraints determine which pairs of qubits are allowed to interact, and which interactions may be implemented in parallel. Hence, these constraints present a challenge for performing quantum computation, and must be accounted for in order to implement quantum information processing tasks. One way to handle locality constraints is by performing routing, which is the task of implementing permutations of qubits. By permuting qubits, one may simulate all-to-all connectivity and therefore implement arbitrary quantum circuits with an overhead in circuit depth. In order to minimize this overhead, the problem of routing efficiently under interaction constraints has been studied using swap gates \cite{sivarajah2020,Childs_Schoute_Unsal_2019, Cowtan_2019}, teleportation and mid-circuit measurements \cite{plathanam_2025, padda_2024, devulapalli_2024}, and dynamic reconfigurability \cite{Moses_2023, constantinides_optimal_2024}. In particular, the worst-case depth of routing of qubits arranged in a one-dimensional line using reconfigurable neutral-atom hardware has been shown to be $O(\log N)$ ~\cite{xu2024constantoverhead, constantinides_optimal_2024}.

When simulating fermionic systems on qubit-based quantum computers, routing arises naturally due to the structure of fermion-to-qubit mappings. In the Jordan-Wigner encoding, each fermionic mode is represented by a qubit. To reproduce fermionic statistics, the qubits are arranged on a one-dimensional line, independent of the actual geometry of the Hamiltonian: when two fermions are exchanged between neighboring lattice sites on the Jordan-Wigner line, the wavefunction is given a minus sign. This effectively imposes a one-dimensional nearest-neighbor connectivity constraint, regardless of the underlying hardware geometry. To account for this constraint, fermion routing is typically performed using fermi-SWAP networks~\cite{kivlichan_quantum_2018}, in which layers of fSWAP=SWAP$\cdot$CZ gates permute fermionic modes, where the CZ gate applies the minus sign when two modes are occupied by a fermion. Because of this one-dimensional nearest-neighbour structure, a general routing task on $N$ modes uses $O(N)$ layers of such gates. While 1D $O(\log N)$-depth qubit routing schemes can in principle be directly used to  decompose a general fermion permutation into $O(\log N)$ steps consisting of  interleaves (riffle shuffles)~\cite{xu2024constantoverhead,maskara2025} or staircase permutations (in-order swaps)~\cite{constantinides_optimal_2024}, this does not directly improve the overall asymptotic scaling due to the $O(N^2)$ CZ gates which need to be applied in each interleave or staircase, which naively require $O(N)$ depth. Recently, Maskara et al.~\cite{maskara2025} showed that the interleave CZ circuit can in fact be compressed to depth $O(1)$ by using $\Theta(N)$ ancillas and measurement and feedforward, implying that an arbitrary permutation can be implemented in depth $O(\log N)$.

In this work, we show that staircase permutations can be compressed to depth $O(\log N)$ without ancillas, measurement, or feedforward.\footnote{By using the $O(\log N)$-depth compression of CNOT ladders from Ref.~\cite{Remaud_2025}, the interleave permutations in Ref.~\cite{maskara2025} can also be implemented without ancillas, measurement, or feedforward in $O(\log N)$ depth~\cite{maskaraprivate}.} This means that without ancillas, we achieve depth $O(\log^2 N)$ for arbitrary fermion routing. In addition, we generalize this result to arbitrary product-preserving ternary-tree encodings by proving that there exists a depth $O(\log^2 N)$ mapping between these encodings. Finally, we discuss consequences for simulation of fermionic time evolution.

More specifically, we show the following theorem:

\begin{theorem}
\label{thm:main}
For any product-preserving ternary tree encoding of fermions into $N$ qubits,
there exists a circuit of depth $O(\log^2 N)$ that implements any given fermionic permutation of $N$ fermionic modes.
\end{theorem}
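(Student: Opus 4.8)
The plan is to reduce to the Jordan--Wigner case and treat the change of encoding separately. Let $\mathcal{E}$ be the given product-preserving ternary tree encoding and $V_\pi$ the canonical Fock-space unitary implementing the mode permutation $\pi$. Since $\mathcal{E}$ and the Jordan--Wigner encoding $\JW$ are both Clifford encodings into $N$ qubits, their Majorana operators are related by conjugation by some Clifford $C$, so $\mathcal{E}(O)=C\,\JW(O)\,C^\dagger$ for every fermionic $O$, and in particular $\mathcal{E}(V_\pi)=C\,\JW(V_\pi)\,C^\dagger$. It thus suffices to show that (A) $\JW(V_\pi)$ admits a circuit of depth $O(\log^2 N)$ and (B) $C$ admits a circuit of depth $O(\log^2 N)$; the total depth is then $2\,\mathrm{depth}(C)+\mathrm{depth}\bigl(\JW(V_\pi)\bigr)=O(\log^2 N)$. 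Note that $\JW$ is itself a product-preserving ternary tree encoding, so (A) is the special case $C=I$.

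For (A), the first step is to decompose $\pi$ into a product of $O(\log N)$ staircase permutations (in-order swaps) using the $O(\log N)$-depth one-dimensional routing of Ref.~\cite{constantinides_optimal_2024} (equivalently, $O(\log N)$ interleaves \cite{xu2024constantoverhead, maskara2025}). For a single staircase permutation $\sigma$, a short calculation shows that $\JW(V_\sigma)$ factors as a bare qubit permutation $P_\sigma$ composed with the sign circuit $\prod_{\{i,j\}\in\mathrm{Inv}(\sigma)}\CZ_{i,j}$ ranging over the inversions of $\sigma$ (acting on the original registers). On all-to-all connectivity $P_\sigma$ has depth $O(1)$, and the non-crossing structure of a staircase makes $\mathrm{Inv}(\sigma)$ a disjoint union of rectangles $A_k\times B_k$, so the corresponding $\CZ$ phase equals $\bigoplus_k\bigl(\bigoplus_{i\in A_k}x_i\bigr)\bigl(\bigoplus_{j\in B_k}x_j\bigr)$. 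I would implement this by computing, in parallel over $k$ and using disjoint registers, the parity of each $A_k$ and each $B_k$ in place (depth $O(\log N)$), applying one depth-$1$ layer of $\CZ$ gates, and then uncomputing the parities. Each staircase thus costs depth $O(\log N)$, so $\JW(V_\pi)$ costs $O(\log^2 N)$, with no ancillas, measurements, or feedforward.

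For (B), note that the Majorana strings of a product-preserving ternary tree encoding are read off from root-to-leaf paths of its tree, which makes the Clifford $C$ relating $\JW$ and $\mathcal{E}$ \emph{tree-structured}: up to a layer of single-qubit Cliffords it is a CNOT circuit in which the control and target of every gate stand in an ancestor--descendant relation in the tree. To bound the depth of such a circuit, I would decompose the tree recursively at a balanced (centroid) separator; at each of the $O(\log N)$ recursion levels the remaining CNOTs break into \emph{disjoint} caterpillar/ladder subcircuits, each compressible to depth $O(\log N)$ by the CNOT-ladder compression of Ref.~\cite{Remaud_2025}, giving $\mathrm{depth}(C)=O(\log^2 N)$.

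The hardest part will be the depth bound in (B): a product-preserving ternary tree can be as unbalanced as a nested caterpillar of height $\Theta(N)$, so a naive realization of $C$ has depth $\Theta(N)$, and the separator (or heavy-path) decomposition has to be combined with the logarithmic-depth ladder compression in a way that keeps the per-level subcircuits genuinely disjoint. Two smaller issues arise in (A): one should verify that for the chosen notion of staircase the inversions do split into disjoint rectangles (non-crossing of the in-order moves), and---if one instead works with interleaves, whose inversion sets are triangular rather than rectangular---that an additional $O(\log N)$-depth layer of prefix-parity CNOTs, again via \cite{Remaud_2025}, reduces the $\CZ$ pattern to a constant-depth one.
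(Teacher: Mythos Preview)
Your high-level reduction $\Phi_{\mathcal E}(V_\pi)=C\,\Phi_{\JW}(V_\pi)\,C^\dagger$ and the decomposition of part (A) into $O(\log N)$ staircase layers is exactly how the paper proceeds. However, both your execution of (A) and your entire plan for (B) have gaps.

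In (A), the inversion set of a staircase $(m_1,n_1)\cdots(m_k,n_k)$ is \emph{not} a disjoint union of rectangles with pairwise-disjoint registers. A direct computation gives $\mathrm{Inv}(\sigma)=\{m_i\}_i\times\{n_j\}_j\;\cup\;\bigcup_i\{m_i\}\times U_i\;\cup\;\bigcup_j U_j\times\{n_j\}$, where $U_i$ is the set of fixed points in $(m_i,n_i)$. The sets $U_i$ are nested/overlapping, so you cannot compute all the parities ``in place using disjoint registers'' as you propose. The paper resolves exactly this obstruction differently: after reordering the qubits so that $\{m_i\}$, $\{n_j\}$, and the unmarked qubits form three contiguous blocks, each $U_i$ becomes a \emph{contiguous range}, and conjugating the unmarked block by the prefix-parity transform $P$ collapses every contiguous-range $\CZ$-fanout to at most two $\CZ$'s (Lemma~\ref{lem:contiguous-range-cz}). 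This is the missing idea that turns your ``rectangles'' intuition into an actual $O(\log N)$-depth circuit.

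For (B), the assertion that $C$ is, up to a single-qubit layer, a CNOT circuit whose gates respect a \emph{fixed} ancestor--descendant relation in the tree is unjustified and, in general, false. Product preservation does imply that $C$ is a CNOT circuit composed with a diagonal and Pauli corrections, but the linear map $B$ over $\mathrm{GF}(2)$ need not decompose into ancestor--descendant CNOTs in either tree; moreover, an arbitrary Majorana-leaf labeling forces $C$ to contain a genuine fermion permutation component (qubit $\SWAP$'s and long-range $\CZ$'s) that has no tree structure at all. Your centroid/ladder sketch therefore does not control the depth. The paper takes a completely different route: it realises $C$ \emph{dynamically} as a sequence of tree rotations (each a single $\CNOT$ between a node and its current parent, Lemma~\ref{lem:tree-ops}) together with $S$ gates. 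The nontrivial step is a parallel balancing procedure (Theorem~\ref{thm:binary-shaped}) that drives any binary-shaped tree to $O(\log N)$ depth in $O(\log N)$ rounds of disjoint rotations, after which a further $O(\log N)$ rounds straighten it to the Jordan--Wigner spine. This fixes the \emph{shape} in depth $O(\log N)$; the residual freedom in the Majorana labeling is then, by \cite[Lemma 5.9]{chiew2024}, exactly a fermion permutation, which is handled by part (A). You should replace your structural claim about $C$ with this rotation-based construction.
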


We show this theorem by explicitly constructing the circuit, also enabling a practical implementation. In addition, we show the following corollary, using Ref.~\cite{maskara2025}:

\begin{corollary}
For any product-preserving ternary tree encoding of fermions into $N$ qubits,
there exists a circuit of depth $O(\log N)$ with $O(N)$ ancillas and measurement and feedforward that implements any given fermionic permutation of $N$ fermionic modes. 
\end{corollary}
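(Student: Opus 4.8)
\emph{Proof idea.} The plan is to take the circuit underlying Theorem~\ref{thm:main} and replace every logarithmic-depth building block by a constant-depth one built from $O(N)$ ancillas together with measurement and feedforward. Recall that the proof of Theorem~\ref{thm:main} realizes the desired operation as a basis change from the given product-preserving ternary tree encoding to the Jordan--Wigner encoding, followed by the fermionic permutation performed in the Jordan--Wigner encoding, followed by the inverse basis change; each of these three pieces has depth $O(\log^2 N)$, so composing them as is yields only $O(\log^2 N)$. The key observation is that each piece is a composition of $O(\log N)$ rounds, and that with ancillas, measurements, and feedforward every round drops to depth $O(1)$: for the Jordan--Wigner permutation this is precisely the content of Ref.~\cite{maskara2025}, where the $\CZ$ circuit accompanying each interleave (or staircase) is compressed to depth $O(1)$; for the two basis changes it follows from the structural remark below. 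Multiplying $O(\log N)$ rounds by $O(1)$ depth gives overall depth $O(\log N)$, with ancillas recycled between rounds so that a pool of $O(N)$ suffices.

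The structural remark is that each round of the basis-change circuit produced by the proof of Theorem~\ref{thm:main} is a Clifford sub-circuit of a restricted form: an arbitrary permutation of the qubits composed with a bounded number of mutually disjoint ladder-type circuits (CNOT ladders, $\CZ$ staircases, and fanout gates). Two standard facts then handle such a round. First, on all-to-all connectivity an arbitrary permutation of $N$ qubits is a product of two layers of disjoint $\SWAP$ gates---every permutation being a product of two involutions---and so has depth $O(1)$ with no ancillas. Second, a CNOT or $\CZ$ ladder or fanout on $k$ qubits collapses to depth $O(1)$ using $O(k)$ ancillas prepared in cat states, consumed, and corrected by feedforward of Pauli byproducts; this is exactly the gadget of Ref.~\cite{maskara2025}, and a no-ancilla variant of depth $O(\log k)$ appears in Ref.~\cite{Remaud_2025}. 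Because the ladders within a round act on disjoint qubits, their ancilla registers are disjoint as well, so a single pool of $O(N)$ ancillas is enough; and since every round is Clifford, the Pauli byproducts can be commuted past the remaining rounds and corrected in constant depth.

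The step I expect to be the main obstacle is establishing the structural remark with a linear ancilla count. A generic $N$-qubit Clifford circuit of depth $O(\log N)$ is known only with superlinearly many ancillas, so one must genuinely exhibit the ``permutation plus disjoint ladders'' form of each round within the explicit construction of Theorem~\ref{thm:main}, and check that the cat states feeding the parallel ladders can themselves be prepared and disentangled in depth $O(1)$ within the $O(N)$-ancilla budget. Carrying this out is a round-by-round inspection of the circuit of Theorem~\ref{thm:main}; the remaining bookkeeping of Pauli corrections is routine precisely because the whole construction is Clifford.
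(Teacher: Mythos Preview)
Your plan would succeed, but you are working harder than necessary on the basis-change step. Look again at the proof of Theorem~\ref{thm:ternary-jw}: the circuit taking $\phi$ to $\phi_{\JW}$ factors as a tree-shape transformation $C_{\mathrm{shape}}$ (Algorithm~\ref{alg:ternary-jw} together with a few single-qubit Pauli and $S$ layers) followed by a Jordan--Wigner fermionic permutation $\Phi_{\JW}(U_\tau)$ that repairs the residual Majorana labeling. The point you are missing is that $C_{\mathrm{shape}}$ is \emph{already} depth $O(\log N)$ with zero ancillas: every round of Algorithms~\ref{alg:binary-shaped} and~\ref{alg:ternary-jw} is a layer of $\CNOT$s (or $S$ gates) acting on disjoint qubit pairs, hence depth $O(1)$ as written. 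The entire $O(\log^2 N)$ cost of the basis change in Theorem~\ref{thm:ternary-jw} comes from the embedded $\Phi_{\JW}(U_\tau)$, not from the tree manipulations.

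The paper's implicit route to the corollary is therefore a one-liner: since $C_{\mathrm{shape}}\,\phi = \phi_{\JW}\,U_\tau$, one has
\[
\Phi(U_\sigma) \;=\; C_{\mathrm{shape}}^{-1}\;\Phi_{\JW}\!\bigl(U_{\tau\sigma\tau^{-1}}\bigr)\;C_{\mathrm{shape}},
\]
a single Jordan--Wigner permutation sandwiched between two ancilla-free $O(\log N)$-depth shape transformations; invoke Ref.~\cite{maskara2025} once on the middle factor (equivalently, observe as in Section~\ref{sec:applications} that each staircase's parity transforms and $\CZ$-fanouts admit constant-depth measurement-based implementations with $O(N)$ ancillas). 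Your ``structural remark'' and the round-by-round dissection of the basis change are unnecessary---there is nothing in $C_{\mathrm{shape}}$ to compress---and the ``main obstacle'' you anticipate never materializes. What you outline would also go through, but it compresses $\Phi_{\JW}(U_\tau)$ and $\Phi_{\JW}(U_\sigma)$ separately rather than absorbing the former into the latter.
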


The remainder of this manuscript is organized as follows. In Section \ref{sec:prelim}, we introduce definitions and notation. In Section \ref{sec:jw-circuit}, we present a $O(\log^2 N)$-depth circuit using staircase permutations that performs any fermionic permutation under the Jordan-Wigner mapping. In Section \ref{sec:ternary-transforms}, we present a $O(\log^2 N)$-depth circuit that converts between any two product-preserving ternary tree mappings. In Section \ref{sec:applications}, we discuss applications to quantum simulation of fermionic models. In Section \ref{sec:outlook}, we briefly discuss some directions for future work.

\section{Preliminaries \label{sec:prelim}}

In this section, we introduce definitions and notation that are used throughout the paper. 

We work with fermions in second-quantized form, where states represent the occupation of a given mode. Given $N_f$ ordered fermionic modes, we denote
a Fock state as $\ket{x_0, \dots, x_{N_f-1}}_f$, where $x_k \in \{0,1\}$ is the number of fermions occupying the $k$th mode. The subscript $f$ distinguishes a Fock state from a qubit state, which has no subscript. %

We denote by $\mathcal{H}_f$ the Hilbert space spanned by the Fock states and $\mathcal{H}_q$ the standard Hilbert space on $N_q$ qubits.

\begin{definition}[Fermion-to-qubit map]
A fermion-to-qubit mapping %
is an isometry $\phi \colon \mathcal{H}_f \to \mathcal{H}_q$.
\end{definition}

Operators in $\mathcal{H}_f$ can be expressed by fermionic annihilation and creation operators $a_k$ and $a_k^\dagger$, respectively. These are defined on Fock states as
\begin{subequations}
\begin{align}
	a_k \ket{x_0, \dots, x_k=0, \dots, x_{N_f-1}}_f &= 0, \\
	a_k \ket{x_0, \dots, x_k=1, \dots, x_{N_f-1}}_f &= (-1)^{\sum_{j=0}^{k-1} x_j} \ket{x_0, \dots, x_k=0, \dots, x_{N_f-1}}_f,  \\
	a_k^\dagger \ket{x_0, \dots, x_k=0, \dots, x_{N_f-1}}_f &= (-1)^{\sum_{j=0}^{k-1} x_j} \ket{x_0, \dots, x_k=1, \dots, x_{N_f-1}}_f,  \\
	a_k^\dagger \ket{x_0, \dots, x_k=1, \dots, x_{N_f-1}}_f &= 0,
\end{align}
\label{eq:fermion-ops-def}
\end{subequations}
and satisfy the anticommutation relations
\begin{equation}
	\{a_j, a_k\} = \{a_j^\dagger, a_k^\dagger\} = 0, \qquad \{a_j, a_k^\dagger\} = \delta_{jk}.
\end{equation}
The set $\{a_k, a_k^\dagger\}_{k=0}^{N_f-1}$ spans the algebra of fermionic operators $\mathcal{B}(\mathcal{H}_f)$. Qubit operators can be expressed in terms of the Pauli basis $\{I_k, X_k, Y_k, Z_k\}_{k=0}^{N_q-1}$.

Given a fermion-to-qubit mapping $\phi$, each fermionic operator $\mathcal{O}_f \in \mathcal{B}(\mathcal{H}_f)$ has a unique corresponding qubit operator $\mathcal{O}_q \in \mathcal{B}(\im \phi)$ satisfying
\begin{equation}
    \label{eq:operator-transform}
	\mathcal{O}_q \phi(\ket{\psi}_f) = \phi \parens{\mathcal{O}_f \ket{\psi}_f}
\end{equation}
for every $\ket{\psi}_f \in \mathcal{H}_f$. That is, $\mathcal{O}_q = \phi \circ \mathcal{O}_f \circ \phi^{-1}$ acts on the mapped qubit states in the same manner that $\mathcal{O}_f$ acts on fermionic states. 
We denote by $\Phi \colon \mathcal{B}(\mathcal{H}_f) \to \mathcal{B}(\im \phi)$ the operator mapping $\mathcal{O}_f \mapsto \mathcal{O}_q$ induced by $\phi$.
Throughout this paper, we consider the case $N_f = N_q = N$. Then every fermion-to-qubit mapping $\phi$ is surjective, i.e., $\im \phi \cong \mathcal{H}_q$, and $\phi$ is a unitary map. We note that in the case $N_q > N_f$, superfast encodings are known which simulate Hamiltonians with bounded degree interaction graphs with $O(1)$ overhead. Namely, the Bravyi-Kitaev superfast encoding \cite{bravyi_kitaev_2002} satisfies this property with $N_q$ scaling with the degree of the interaction graph $d$, $N_q = \Omega(d N_f)$.

For example, the classic Jordan-Wigner mapping~\cite{jordanwigner}, $\phi_{\JW}$, is given by simply treating the list of occupation numbers as a computational basis state in the qubit space:
\begin{equation}
	\phi_{\JW}\parens[\big]{\ket{x_0, x_1, \dots, x_{N-1}}_f} = \ket{x_0, x_1, \dots, x_{N-1}}. \label{eq:jw}
\end{equation}
Equation (\ref{eq:jw}) induces a mapping of creation and annihilation operators to qubit operators as follows:
\begin{subequations}
\begin{align}
	\Phi_{\JW}(a_k) &= \half Z_0 \otimes Z_1 \otimes Z_2 \otimes \dots \otimes Z_{k-1} \otimes (X_k + i Y_k), \\
	\Phi_{\JW}(a_k^\dagger) &= \half Z_0 \otimes Z_1 \otimes Z_2 \otimes \dots \otimes Z_{k-1} \otimes (X_k - i Y_k).
\end{align}
\end{subequations}
Note that $\sigma_k^+ = \half \parens{X_k +i Y_k}$ and $\sigma^-_k = \half \parens{X_k - i Y_k}$ are simply the qubit lowering and raising operators, respectively, while the leading string of Pauli $Z$s implements the $(-1)^{\sum_{j=0}^{k-1} x_j}$ phase appearing in Eq.~\eqref{eq:fermion-ops-def}.

The Jordan-Wigner mapping is often considered the simplest fermion-to-qubit mapping due to its simplicity in mapping the states. However, its disadvantage is in the linear weight of the Pauli strings of mapped fermionic operators~\cite{Seeley_2012}. Consequently, many other mappings %
(see e.g.~Refs.~\cite{bravyi_kitaev_2002, verstraete2005mapping, Setia_2019, Jiang_2020, derby2021})
have been developed. For example, the Bravyi-Kitaev transformation~\cite{bravyi_kitaev_2002} maps individual fermion operators to log-weight Pauli strings.
However, as argued in Section~\ref{sec:intro}, this should not be the sole figure of merit for a fermion-to-qubit map. Rather, the overall simulation cost depends on the extent to which multiple Hamiltonian terms may be implemented in parallel. 
Therefore, in this work, we focus on the task of implementing fermion permutations.

\begin{definition}[Fermion permutation]
\label{def:fermion-permutation}
Let $\sigma \colon [N] \to [N]$ be a permutation. The fermion routing operator implementing $\sigma$ is the operator $U_\sigma$ satisfying %
\begin{subequations}
\begin{align}
	U_\sigma a_k U_\sigma^\dagger &= a_{\sigma(k)} \quad \forall\, k \in [N] \label{eq:fermion-permutation-a} \\
	U_\sigma \ket{0 \ldots 0}_f &= \ket{0 \ldots 0}_f. \label{eq:fermion-permutation-b}
\end{align}
\end{subequations}
Note that condition \eqref{eq:fermion-permutation-a} uniquely specifies the action of $U$ up to a phase, while \eqref{eq:fermion-permutation-b} fixes the phase. In particular, we can calculate the action of $U$ on a general Fock state by writing \begin{equation}
	\ket{x_0, \dots, x_{N-1}}_f = (a_0^\dagger)^{x_0} \cdots (a_{N-1}^\dagger)^{x_{N-1}} \ket{0 \dots 0}_f
\end{equation}
and letting $U_\sigma$ conjugate each creation operator. 
\end{definition}

Central to our qubit circuits will be controlled-$Z$ gates. We let $\CZ_{a,b}$ denote such a gate acting on qubits $a$ and $b$. The effect of this is a phase $(-1)^{x_a x_b}$ on the qubit state $\ket{x_a x_b}$. 
When multiple $\CZ$ gates share a common qubit, we call this a $\CZ$-fanout, as shown in Fig.~\ref{fig:cz_cnot_fanout}(a).

$\CZ$ gates are related to $\CNOT$ gates by conjugating the target qubits with Hadamard gates, as depicted in Fig.~\ref{fig:cz_cnot_fanout}(b). 
Another related gate is the parity gate, also known as the $\MOD_2$ gate, where several $\CNOT$ gates share the same target qubit. The $\MOD_2$ gate is related to the $\CNOT$-fanout by conjugating all qubits with Hadamard gates~\cite{moore1999fanoutparity}.

Another relevant unitary we will use is the parity transform $P$ which acts on computational basis states as
\begin{equation}
	P \ket{x_0, x_1, \dots, x_{N-1}} = \ket{p_0, p_1, \dots, p_{N-1}}, \text{ where } 
	p_k = \bigoplus_{i=0}^k x_k.
	\label{eq:parity}
\end{equation}
To avoid confusion between the parity \emph{gate} and the parity \emph{transform}, we will henceforth refer to the former as the $\MOD_2$ gate.
It is well-known that all four of $\CZ$-fanout, $\CNOT$-fanout, $\MOD_2$, and $P$ can be implemented in a circuit with depth $O(\log n)$~\cite{moore1999fanoutparity, fanout-lower-bound, Remaud_2025}. We give another derivation of a log-depth circuit for $P$ from a ternary tree perspective in Section~\ref{sec:ternary-transforms}.

\begin{figure}
\centering
\includegraphics[width=\linewidth]{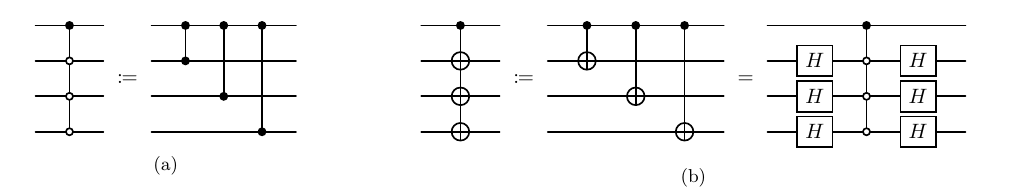}
\caption{ (a) The $\CZ$-fanout gate (left) is defined as a circuit consisting of one $\CZ$ gate between the control qubit (filled circle) and each target qubit (open circles). 
(b) The ordinary ($\CNOT$) fanout is related to the $\CZ$-fanout through conjugating each target qubit by Hadamards.
}\label{fig:cz_cnot_fanout}
\end{figure}

\section{Permutations in Jordan-Wigner}
\label{sec:jw-circuit}

In this section, we give an explicit circuit construction for any fermionic permutation under the Jordan-Wigner mapping. We begin by building circuits for transpositions that swap two fermionic modes. A general permutation can then be implemented as a product of swaps.

\begin{definition}[Fermionic swap]
The \emph{fermionic swap} operator $\fSWAP_{jk}$ is the fermion permutation operator in Definition \ref{def:fermion-permutation} that implements the transposition swapping modes $j$ and $k$.
\end{definition}

In the Fock basis, the fermionic swap operator acts on adjacent modes 
by swapping the occupation numbers and introducing a $-1$ phase if both are occupied \cite{bravyi_kitaev_2002}:
\begin{equation}
    \fSWAP_{i, i + 1} \ket{x_1, x_2, \dots, x_i, x_{i + 1}, \dots, x_{N}}_f = (-1)^{x_i x_{i + 1}}\ket{x_1, x_2, \dots, x_{i + 1}, x_i, \dots, x_{N}}_f.
\end{equation}
From this, the action of $\fSWAP_{ij}$ between modes with $i < j$ %
may be derived:
\begin{equation}\label{eq:fswap_part1}
    \fSWAP_{ij} \ket{x_0, \dots, x_i, \dots, x_j, \dots x_{N - 1}}_f = (-1)^p\ket{x_0, \dots, x_j, \dots, x_i, \dots, x_{N - 1}}_f,
\end{equation}
with
\begin{equation}\label{eq:fswap_part2}
    p = x_i \sum_{k = i + 1}^{j-1} x_k + x_j \sum_{k = i}^{j - 1} x_k.
\end{equation}
This equation also holds for the corresponding qubit states in the Jordan-Wigner encoding, which translates into two $\CZ$-fanout gates followed by a qubit swap, as shown in Fig.~\ref{fig:fswap-circuit}. 

\begin{figure}
    \centering
    \includegraphics[width=\textwidth]{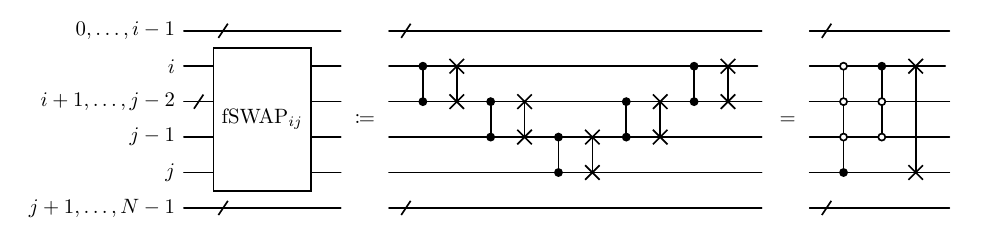}
    \caption{\label{fig:fswap-circuit}The circuit for performing the swap of modes $i$ and $j$ in the Jordan-Wigner encoding. The two $\CZ$-fanouts in the final equality are generated by commuting each $\CZ$ gate to the front of the $\SWAP$ circuit, and then collapsing the $\SWAP$ operators into a single $\SWAP$ of modes $i$ and $j$.}
\end{figure}

\begin{theorem}
\label{thm:jw-circuit}
Given any permutation $\sigma \colon [N] \to [N]$, there exists a circuit of depth $O(\log^2 N)$ that implements $\Phi_{\JW}(U_\sigma)$.
\end{theorem}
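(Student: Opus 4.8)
The plan is to combine two ingredients, the first of which is classical: any permutation of a one‑dimensional array of $N$ elements decomposes into $L=O(\log N)$ \emph{staircase} permutations $\sigma=\tau_L\circ\cdots\circ\tau_1$~\cite{constantinides_optimal_2024}, where each $\tau_\ell$ acts within a collection of disjoint contiguous blocks, cyclically shifting each block (equivalently, it is realized on each block by an in‑order sequence of adjacent swaps). The second ingredient is the $O(\log N)$‑depth circuit for the parity transform $P$ of Eq.~\eqref{eq:parity} and for $\CZ$‑fanouts~\cite{moore1999fanoutparity, Remaud_2025}. By Definition~\ref{def:fermion-permutation}, fermion routing operators compose and each fixes the Fock vacuum, so $U_\sigma=U_{\tau_L}\cdots U_{\tau_1}$ and hence $\Phi_{\JW}(U_\sigma)=\Phi_{\JW}(U_{\tau_L})\cdots\Phi_{\JW}(U_{\tau_1})$. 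It then suffices to implement each factor in depth $O(\log N)$; multiplying by $L$ gives $O(\log^2 N)$.

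The heart of the argument is the case of a single staircase permutation $\tau$. I would first record the general structure $\Phi_{\JW}(U_\pi)=\Pi_\pi D_\pi$, where $\Pi_\pi$ is the qubit permutation $\ket{x}\mapsto\ket{y}$ with $y_{\pi(k)}=x_k$, and $D_\pi$ is diagonal with $D_\pi\ket{x}=(-1)^{q_\pi(x)}\ket{x}$, $q_\pi(x)=\sum_{(k,k')\in\mathrm{Inv}(\pi)}x_k x_{k'}$ and $\mathrm{Inv}(\pi)=\{(k,k'):k<k',\ \pi(k)>\pi(k')\}$. This follows from Definition~\ref{def:fermion-permutation} by writing $\ket{x}_f=\prod_k (a_k^\dagger)^{x_k}\ket{0}_f$, conjugating each creation operator through $U_\pi$, and counting the anticommutation signs needed to re‑sort the result into canonical order; it specializes to Eqs.~\eqref{eq:fswap_part1}--\eqref{eq:fswap_part2} and the two‑$\CZ$‑fanout picture of Fig.~\ref{fig:fswap-circuit} for a single transposition. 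With all‑to‑all connectivity $\Pi_\pi$ costs $O(1)$ depth (write $\pi$ as a product of two involutions, each a single layer of disjoint $\SWAP$ gates), so the work is in $D_\tau$. Here the staircase structure is what we exploit: because $\tau$ maps each contiguous block to itself, $\mathrm{Inv}(\tau)$ has no cross‑block pairs, and within a block that is cyclically shifted, $\mathrm{Inv}$ is exactly the complete bipartite graph between a contiguous prefix (the part that does not wrap around) and the complementary contiguous suffix. Thus $q_\tau(x)=\sum_{\text{blocks }B}\big(\sum_{k\in\mathrm{pref}(B)}x_k\big)\big(\sum_{k'\in\mathrm{suff}(B)}x_{k'}\big)$ with all of these prefixes and suffixes living on pairwise disjoint sets of qubits. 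Since a product of two integers is odd iff both are, each block's contribution equals $(-1)^{p_{\mathrm{pref}(B)}\,p_{\mathrm{suff}(B)}}$ in terms of the two parities, which I implement ancilla‑free by applying the $O(\log N)$‑depth restriction of $P$ to $\mathrm{pref}(B)$ (depositing that parity on the block's boundary qubit), the analogous transform on $\mathrm{suff}(B)$, one $\CZ$ between the two boundary qubits, and then the inverse parity transforms. The blocks being disjoint, all of these subcircuits run in parallel, so $D_\tau$, and hence $\Phi_{\JW}(U_\tau)$, has depth $O(\log N)$, completing the reduction.

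I expect the main obstacle to be the combinatorial core of the single‑staircase case: verifying carefully that the inversion set of each staircase permutation produced by the decomposition really does split into complete bipartite pieces carried on pairwise disjoint qubit supports, so that (i) the diagonal phase $D_\tau$ reduces \emph{exactly} to a product of two‑qubit $\CZ$s after a block‑local change of basis by $P$, and (ii) the parity transforms for different blocks never collide. This disjoint‑support property is precisely why the staircase decomposition is used rather than an arbitrary Bene\v{s}‑style layering into disjoint transpositions: a layer of disjoint transpositions is \emph{not} enough, because the inversion \emph{intervals} $[i_s,j_s]$ of those transpositions can nest, their inversion sets then overlap in qubit support, and a naive piecewise implementation of the corresponding diagonal phase would again cost $\Theta(N)$ depth. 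The remaining points — the explicit ancilla‑free compilation of $P$ in depth $O(\log N)$ and the exactness of the uncomputation — are routine given the cited circuits.
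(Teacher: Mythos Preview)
Your factorization $\Phi_{\JW}(U_\pi)=\Pi_\pi D_\pi$ with $D_\pi\ket{x}=(-1)^{q_\pi(x)}\ket{x}$ and $q_\pi(x)=\sum_{(k,k')\in\mathrm{Inv}(\pi)}x_kx_{k'}$ is correct, and the two-stage plan (decompose into $O(\log N)$ staircase layers, implement each in $O(\log N)$ depth) matches the paper's. The gap is in your model of a staircase permutation. A staircase in the sense of Lemma~\ref{lem:staircase} and \cite{constantinides_optimal_2024} is \emph{not} a cyclic shift of a contiguous block: it is a product of disjoint transpositions $(m_1,n_1)\cdots(m_k,n_k)$ with $m_1<\cdots<m_k<n_1<\cdots<n_k$, with intermediate positions possibly fixed. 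Its inversion set is not complete bipartite between a contiguous prefix and suffix. Concretely, on $\{1,\dots,5\}$ the staircase $\tau=(1\ 3)(2\ 5)$ (with $4$ fixed) has $\mathrm{Inv}(\tau)=\{(1,3),(1,5),(2,3),(2,4),(2,5),(4,5)\}$; the pair $(1,4)$ is absent while $(4,5)$ is present, so no contiguous $A,B$ give $\mathrm{Inv}(\tau)=A\times B$. Your ``parity on prefix, parity on suffix, single $\CZ$'' circuit therefore does not implement $D_\tau$ for the staircase layers the cited decomposition actually produces.

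The paper's argument is close in spirit but needs an extra step you are missing. After commuting all qubit $\SWAP$s to one side, the remaining $\CZ$-fanouts do not have contiguous targets in the original qubit order; the key move is to \emph{relabel} the qubits so that the three sets $\{m_i\}$, $\{n_i\}$, and the fixed (``unmarked'') qubits each become contiguous. In that relabeled order the diagonal splits into a constant number of contiguous-range $\CZ$-fanout circuits (Lemma~\ref{lem:contiguous-range-cz}), each of which compresses to $O(\log N)$ depth via exactly your parity-conjugation idea. So your primitive is right, but one $\CZ$ per block is not enough; you need the relabeling and Lemma~\ref{lem:contiguous-range-cz} to tame the irregular inversion pattern. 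Alternatively, if you can exhibit an $O(\log N)$-layer decomposition of an arbitrary permutation into disjoint \emph{cyclic shifts} of contiguous blocks, your argument would go through as written---but that is not what \cite{constantinides_optimal_2024} supplies and would require its own proof.
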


At a high level, we proceed in two stages. First, we decompose an arbitrary permutation into a product of $\ceil{\log_2 N}$ layers of disjoint \emph{staircase} permutations, which we define below. Then, we show how to implement each staircase permutation in depth $O(\log N)$, yielding a depth-$O(\log^2 N)$ routing circuit for any permutation.

We remark that these two stages resemble the construction in Ref.~\cite{maskara2025}, where an arbitrary permutation is decomposed into $O(\log N)$ layers of special efficient permutations. However, we differ in the choice of these special efficient permutations: we use ones that we call staircase permutations, while Ref.~\cite{maskara2025} uses others that they call interleaves. While some permutations are both interleaves and staircase permutations, neither is a subset of the other. Every staircase permutation can be implemented as a product of two interleaves \cite[Sec.~SI]{constantinides_optimal_2024}, and it remains open whether every interleave can be implemented via a constant number of staircase permutations.

\begin{lemma}
\label{lem:staircase}
Let a \emph{staircase} permutation be any permutation that can be written as a sequence of disjoint transpositions, $(m_1, n_1), \dots, (m_{k}, n_{k})$, such that the sequences $m_1, m_2, \dots, m_{k}$ and $n_1, n_2, \dots n_{k}$ are strictly increasing, and $m_k < n_1$. We call the interval $[m_1, n_k]$ its \emph{range}. Then every permutation $\sigma\colon [N] \to [N]$ can be written as a composition of $\lceil\log_2 N\rceil$ layers of staircase permutations, where each layer consists of a product of staircase permutations with disjoint ranges.
\end{lemma}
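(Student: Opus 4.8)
The plan is to prove the lemma by induction on $N$, using a divide-and-conquer routing that fixes, with each layer, one more bit of every element's target address. Concretely, split $[N]=L\sqcup R$ with $L=\{0,\dots,\lceil N/2\rceil-1\}$ and $R=\{\lceil N/2\rceil,\dots,N-1\}$, and fix the convention that the permutation $\sigma$ to be realized sends the element at position $i$ to position $\sigma(i)$. Call a position $i\in L$ \emph{misplaced} if $\sigma(i)\in R$, and a position $i\in R$ misplaced if $\sigma(i)\in L$; let $M_L,M_R$ be the sets of misplaced positions in the two halves. The goal of the first layer is to move, with a single staircase permutation, every element that belongs in $R$ out of $L$ and vice versa, after which the residual routing task splits into two independent problems on $L$ and on $R$.

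The key step is to check that one staircase does this. Since $\sigma$ is a bijection, counting the elements destined for $L$ gives $|M_L|=|M_R|=:k$; write $M_L=\{m_1<\dots<m_k\}$ and $M_R=\{n_1<\dots<n_k\}$ and take $\pi=(m_1,n_1)\cdots(m_k,n_k)$, a product of disjoint transpositions. Because $m_i\le\lceil N/2\rceil-1<\lceil N/2\rceil\le n_1$, the sequences $m_1,\dots,m_k$ and $n_1,\dots,n_k$ are strictly increasing with $m_k<n_1$, so $\pi$ is a staircase permutation with range $[m_1,n_k]\subseteq[0,N-1]$ (and $k=0$ just gives the identity). After $\pi$, every position in $L$ holds an element whose target lies in $L$ — it either never moved and was already of this type, or it arrived from some $n_i$ with $\sigma(n_i)\in L$ — and symmetrically for $R$, so the residual permutation is block-diagonal with respect to $L$ and $R$.

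Then I would recurse: the residual task restricted to $L$ is an arbitrary permutation of a block of size $\lceil N/2\rceil$, and the one on $R$ a permutation of a block of size $\lfloor N/2\rfloor\le\lceil N/2\rceil$, so by the inductive hypothesis each admits a decomposition into at most $\lceil\log_2\lceil N/2\rceil\rceil$ staircase layers. I run the two sub-recursions in parallel: the first layer of the whole circuit is $\pi$, and each subsequent layer is the union of the corresponding layers of the $L$- and $R$-recursions, whose staircase ranges lie respectively in $[0,\lceil N/2\rceil-1]$ and $[\lceil N/2\rceil,N-1]$. Since within each sub-recursion a layer already consists of disjoint-range staircases, this union is again a layer of disjoint-range staircases (padding the shorter side with a trivial layer). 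Writing $f(N)$ for the resulting layer count, $f(1)=0$ and $f(N)=1+f(\lceil N/2\rceil)$, whose solution is $f(N)=\lceil\log_2 N\rceil$.

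The routine-but-necessary parts will be the counting identity $|M_L|=|M_R|$, the verification that the residual permutation is genuinely block-diagonal, the arithmetic identity $\lceil\log_2\lceil N/2\rceil\rceil=\lceil\log_2 N\rceil-1$ for $N\ge 2$ (which follows from $2^{t-1}<N\le 2^t\Rightarrow 2^{t-2}<\lceil N/2\rceil\le 2^{t-1}$), and the bookkeeping that layers from the two sub-recursions combine into disjoint-range layers. The one genuinely nontrivial point — which I would emphasize — is that matching the misplaced positions of the two halves in sorted order ($m_i\leftrightarrow n_i$) automatically satisfies the staircase ordering constraints, so a single staircase, rather than (say) an interleave, suffices per recursion level; this is exactly what keeps the total at $\lceil\log_2 N\rceil$ layers.
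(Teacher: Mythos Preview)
Your proposal is correct and follows essentially the same divide-and-conquer argument as the paper: split $[N]$ into halves, pair the misplaced positions in sorted order to form a single staircase, and recurse on the two halves in parallel. The paper's proof is terser (it does not spell out the recurrence or the arithmetic identity) and uses $\lfloor N/2\rfloor$ rather than $\lceil N/2\rceil$ for the left half, but the idea is identical.
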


\begin{proof}
In the first layer, we partition $[N]$ into two halves: $L = \{1, 2, \dots, \floor{N/2}\}$ and $R = \{\floor{N/2}+1, \dots, N\}$. Let $x_1 < x_2 < \dots < x_k$ be all the elements of $L$ such that $\sigma(x_i) \in R$ for every $x_i$. Then there must be exactly $k$ elements of $R$, which we label $y_1 < y_2 < \dots < y_k$, with $\sigma(y_j) \in L$ for every $y_j$. We apply the staircase permutation $\sigma_1 = (x_1, y_1) \cdots (x_k, y_k)$, as depicted in Fig.~\ref{fig:staircase_perm_breakdown}. After applying $\sigma_1$, the left half contains exactly the elements mapping to $L$ and the right half contains exactly the elements mapping to $R$ (i.e.\ $(\sigma \circ \sigma_1^{-1})(L) = L$ and $(\sigma \circ \sigma_1^{-1})(R) = R$).

In subsequent layers, we recursively apply this construction within $\sigma_1(L)$ and $\sigma_1(R)$. Since the halves are disjoint, the two resulting staircase permutations are disjoint. As each layer operates on subsets of half the size of the prior layer, there are $\ceil{\log_2 N}$ layers.
\end{proof}

\begin{figure}[t]
    \centering
    \includegraphics[width=0.8\textwidth]{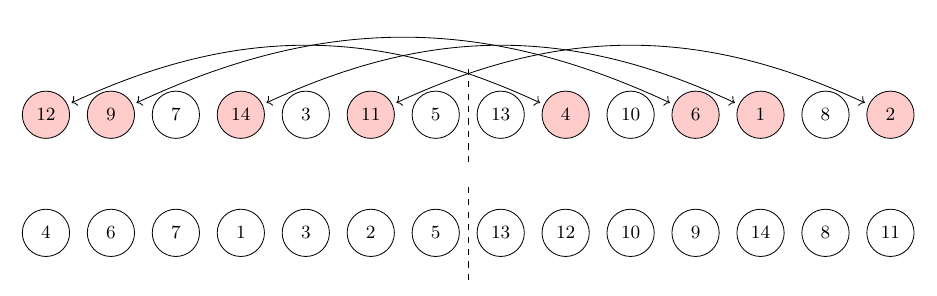}
    \caption{\label{fig:staircase_perm_breakdown}In this example, each element (indicated by a circle) is labeled with its target position under a permutation $\sigma$. The first step of implementing this permutation using staircase permutations is shown. Red marked elements have destinations across the dashed midpoint of the array, and are arranged in pairs and swapped in one staircase permutation step. Following this staircase step, the algorithm is run recursively on each side of the midpoint, treating the left and right halves separately.}
\end{figure}

Next, we introduce a technical circuit construction that is equivalent to \cite[Lemma 3]{maskara2025} but that we rederive here to demonstrate ancilla-free techniques.

\begin{lemma}
\label{lem:contiguous-range-cz}
Let a \emph{contiguous-range $\CZ$-fanout circuit} be a circuit consisting of $\CZ$-fanout gates with the following two properties:
\begin{enumerate}
\item Each fanout has a distinct control qubit on which no other gates act, and
\item Each fanout has targets spanning a contiguous range of qubits.
\end{enumerate} 
Then any contiguous-range $\CZ$-fanout circuit with support on $n$ qubits can be implemented in $O(\log n)$ depth.
\end{lemma}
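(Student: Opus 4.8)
The plan is to reduce an arbitrary contiguous-range $\CZ$-fanout circuit, by conjugating with a parity transform, to just three pieces --- one parity transform, its inverse, and two layers of mutually vertex-disjoint $\CZ$-fanouts --- each of which has depth $O(\log n)$ by the implementations of $\CZ$-fanout and of the parity transform $P$ recalled in Section~\ref{sec:prelim}. To set up notation, let $C=\{c_1,\dots,c_k\}$ be the (distinct) control qubits and $T=[n]\setminus C$. Since each control is acted on only by its own fanout, no control lies inside any target range $[a_i,b_i]$, so every range is contained in $T$ and is therefore a block of \emph{consecutive} elements of $T$. Writing the elements of $T$ in increasing order as $t_1<t_2<\cdots$, suppose fanout $i$ has range $\{t_{l_i},\dots,t_{m_i}\}$, so its right endpoint is $b_i=t_{m_i}$ and, when $l_i>1$, the element of $T$ immediately below its range is $a_i^-:=t_{l_i-1}$.

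Next I would apply the parity transform to $T$ only. Let $P_T$ be the transform that replaces the value on $t_j$ by the running parity $\pi_j:=x_{t_1}\oplus\cdots\oplus x_{t_j}$ while leaving the controls untouched; up to relabeling qubits this is just the parity transform $P$ of Eq.~\eqref{eq:parity} on $|T|\le n$ qubits, so $P_T$ and $P_T^{-1}$ have depth $O(\log n)$. The $i$th fanout applies the phase $(-1)^{x_{c_i}\sum_{j\in[a_i,b_i]}x_j}$, and $\sum_{j\in[a_i,b_i]}x_j\equiv\pi_{m_i}\oplus\pi_{l_i-1}\pmod 2$ with the convention $\pi_0:=0$. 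Hence conjugating the whole circuit by $P_T$ collapses the $i$th fanout to the two-gate circuit $\CZ_{c_i,b_i}\,\CZ_{c_i,a_i^-}$ (the second gate absent when $l_i=1$). All these $\CZ$ gates are diagonal and hence commute, so the conjugated circuit equals $\bigl(\prod_i \CZ_{c_i,b_i}\bigr)\bigl(\prod_i \CZ_{c_i,a_i^-}\bigr)$.

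Finally I would regroup each of the two products. In $\prod_i\CZ_{c_i,b_i}$, group the gates by the value of their $T$-endpoint: for each value $b$ occurring as some $b_i$, the gates with that endpoint form a single $\CZ$-fanout centered at $b$ with target set $\{c_i:b_i=b\}$. Because the $c_i$ are distinct and lie in $C$ while the endpoints $b$ lie in $T$, these fanouts (over distinct $b$) act on pairwise disjoint sets of qubits; thus $\prod_i\CZ_{c_i,b_i}$ is one layer of vertex-disjoint $\CZ$-fanouts, implementable in depth $O(\log n)$ using the single-fanout construction of Section~\ref{sec:prelim}. The same argument applies to $\prod_i\CZ_{c_i,a_i^-}$. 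Composing with $P_T$ and $P_T^{-1}$ gives total depth $O(\log n)$, proving the lemma.

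I expect the only delicate point to be the last step: one must verify that the constantly many layers of fanouts extracted from the conjugated circuit really are vertex-disjoint so that they parallelize, and this is precisely where the hypothesis that each control is distinct and touched by no other gate is used --- it guarantees that no control is a range endpoint (or range-adjacent $T$-element) of any fanout, so the control sets and centers of the grouped fanouts cannot collide. The rest is routine bookkeeping of endpoints and degenerate cases (ranges beginning at $t_1$, or empty ranges, which act trivially).
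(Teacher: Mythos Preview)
Your proof is correct and follows essentially the same approach as the paper: conjugate the target qubits by the parity transform so that each contiguous-range fanout collapses to at most two $\CZ$ gates at its range endpoints, then regroup the resulting gates (by endpoint) into two layers of vertex-disjoint $\CZ$-fanouts. Your treatment is in fact somewhat more explicit than the paper's about why the regrouped fanouts are vertex-disjoint, correctly identifying that Property~1 forces every target range to lie entirely in $T$, which is exactly what makes the centers (in $T$) and targets (in $C$) of the grouped fanouts non-colliding.
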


\begin{proof}
As shown in Fig.~\ref{fig:parity-fanout}, a contiguous-range $\CZ$-fanout whose targets are conjugated with the parity ($P$) transformation turns into at most two $\CZ$ gates. Specifically, if the $\CZ$-fanout originally has control $0$ and targets $\{i, i + 1, \dots, j\}$, it becomes the gates $\CZ_{0, j}$ and $\CZ_{0, i-1}$ under conjugation by $P$, unless $i$ is the first qubit acted on by $P$, in which case it becomes the single $\CZ_{0, j}$ gate. Using this insight, we begin by conjugating by $P$ the shared target qubits of all $k$ 
contiguous-range $\CZ$-fanouts, as shown in the example of Fig.~\ref{fig:circuit-ranges}. As the control lines of each $\CZ$-fanout are distinct, we can separate the %
resulting $\CZ$ gates into two sets, marked blue and red, so that within each set, each $\CZ$ acts on a distinct qubit from the original set of controls. We depict this in Fig.~\ref{fig:circuit_ranges_grouped}. Finally, using the property that all $\CZ$ gates commute, we observe that the blue $\CZ$ gates and red $\CZ$ gates may be collapsed into two layers of at most $k$ $\CZ$-fanouts, where in each layer each fanout acts on disjoint qubits. Since a $\CZ$-fanout can be done in depth $O(\log n)$, this completes the proof. 
\end{proof}

\begin{figure}
    \centering
    \includegraphics[]{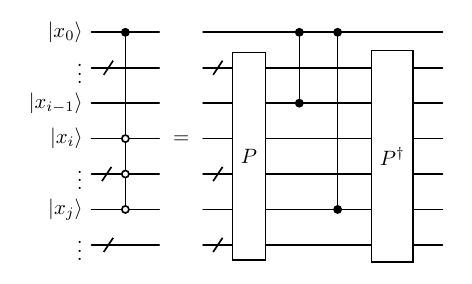}
    \caption{\label{fig:parity-fanout} The effect of conjugating the targets of a $\CZ$-fanout with the parity ($P$) gate, as defined in Eq.~\eqref{eq:parity}. The fanout, which has control $0$ and targets $\{i, \dots, j\}$, becomes two $\CZ$ gates, one $\CZ_{0, i - 1}$ and another $\CZ_{0, j}$. If the first qubit in $P$ is $i$, then the $\CZ_{0, i - 1}$ gate is not present.}
\end{figure}

\begin{figure}[h]
    \centering
    \includegraphics[width=\textwidth]{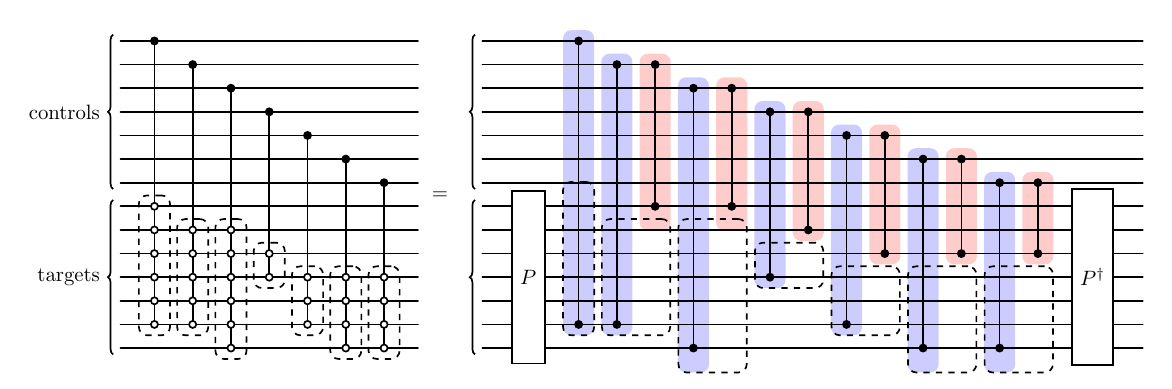}
    \caption{\label{fig:circuit-ranges}An example of a contiguous-range $\CZ$-fanout circuit. When transformed into the parity basis, the $\CZ$-fanouts turn into a sequence of individual $\CZ$ gates.}
\end{figure}

\begin{figure}[h]
    \centering
    \includegraphics[width=\textwidth]{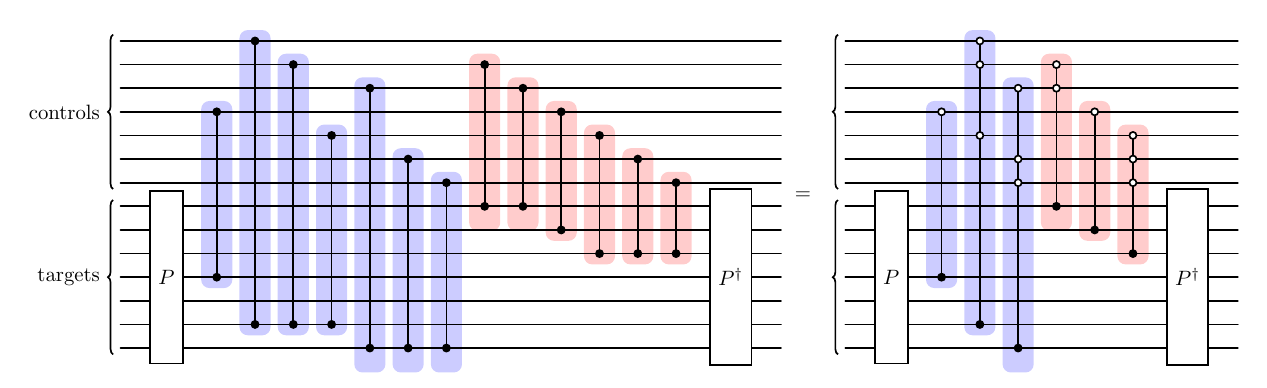}
    \caption{\label{fig:circuit_ranges_grouped} Continuing the example from Fig.~\ref{fig:circuit-ranges}, the $\CZ$ gates can be grouped into two sets (highlighted blue and red) of $\CZ$-fanouts. Within each set, the $\CZ$-fanouts act on (i.e., have as their targets) disjoint sets of qubits.}
\end{figure}

We are now ready to prove Theorem~\ref{thm:jw-circuit}.

\begin{proof}[Proof of Theorem~\ref{thm:jw-circuit}]
We begin by decomposing $\sigma$ into $O(\log N)$ layers of disjoint staircase permutations using Lemma~\ref{lem:staircase}. Since, in each layer, each staircase permutation has a disjoint range of elements,
by promoting each of the transpositions in the $\lceil \log_2 N \rceil$ layers of staircase permutations into fermionic swaps in the Jordan-Wigner encoding, one obtains a fermion routing circuit similarly made of $\lceil \log_2 N \rceil$ layers of fermionic staircase permutations acting on disjoint sets of qubits. This is seen by the property that each $\fSWAP_{ij}$ acts only on qubits $\{i, i + 1, \dots, j\}$, so each fermionic staircase permutation acts on disjoint sets of qubits in each layer, and thus the problem of fermion routing is reduced to implementing arbitrary fermionic staircase permutations, which we now show can be done in $O(\log N)$ depth.

To implement a general fermionic staircase permutation $(m_1,n_1)\dots(m_{k}, n_{k})$, we begin with a naive circuit implementing each transposition as two $\CZ$-fanouts followed by a qubit $\SWAP$, as shown in an example in Fig.~\ref{fig:starting-circuit}. We follow this example throughout the rest of our proof, though our discussion is general to any staircase permutation.
\begin{figure}[H]
\centering
\includegraphics[]{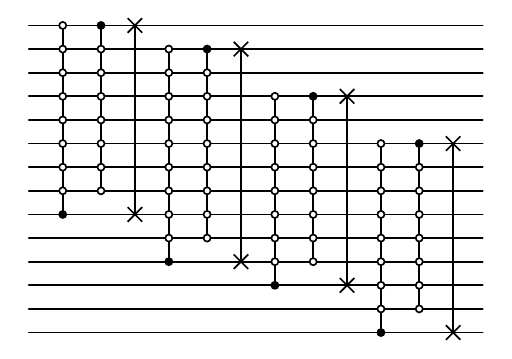}
\caption{\label{fig:starting-circuit}Example of a staircase of $\fSWAP$s expanded in terms of $\CZ$-fanouts and $\SWAP$ gates (see Fig.\ \ref{fig:fswap-circuit}).}
\end{figure}
\noindent Note that all $\CZ$ gates commute with each other, while commuting a $\SWAP$ with a $\CZ$ simply changes which qubits the $\CZ$ acts on, transporting the target qubit of a $\CZ$ from one end of the $\SWAP$ to the other. We commute all $\SWAP$ gates to the right of the naive circuit, shown for our example in Fig.~\ref{fig:commuted-circuit}. This updates the $\CZ$-fanouts by reassigning their targets, while the staircase property of the permutation ensures that the controls of each fanout do not change. Furthermore, we commute the $\CZ$-fanouts controlled by a qubit in $\{n_i\}_i$ to the left and group them as circuit $(1)$, and then group the fanouts with a control qubit in $\{m_i\}_i$ as circuit $(2)$. These are depicted for the example in dashed red boxes in Fig.~\ref{fig:commuted-circuit}. To elucidate a pattern generated circuit, we mark those qubits in the set $\{m_i\}_i$ blue, and those in the set $\{n_i\}_i$ red. The remaining qubits are unmarked and labeled in order starting from $1$.

\begin{figure}[H]
\centering
\includegraphics[]{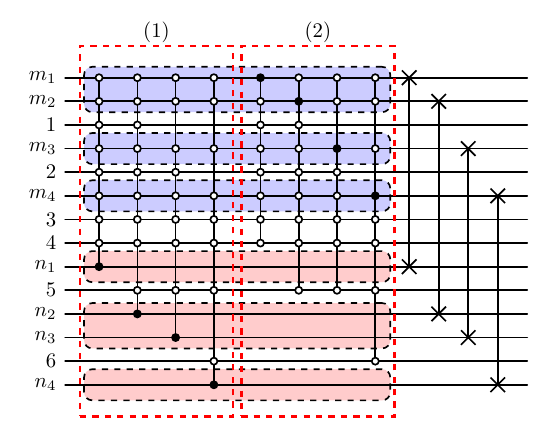}
\caption{\label{fig:commuted-circuit} The example of a staircase of $\fSWAP$s continued from Fig.~\ref{fig:starting-circuit}. The properties of $\SWAP$ gates and commutation of $\CZ$ gates are used to bring each $\CZ$-fanout to the left of the circuit. We commute the $\CZ$-fanouts into groups $(1)$ and $(2)$. In group $(1)$, each fanout has a control qubit in $\{n_i\}_i$ 
and in group $(2)$ each fanout has a control qubit in $\{m_i\}_i$.
Qubits $\{m_i\}_i$ and $\{n_i\}_i$ are labeled and marked blue and red, respectively. The rest of the qubits are labeled starting from $1$ in order, and unmarked.}
\end{figure}

We make some observations on the structure of the fanouts in the commuted circuit, such as in the example in Fig.~\ref{fig:commuted-circuit}, starting with circuit $(1)$. Specifically, the first fanout of each $\fSWAP_{m_i, n_i}$ operation in the naive circuit (e.g.~Fig.~\ref{fig:starting-circuit}) has control $n_i$, targets $\{m_i, m_i + 2, \dots, n_i - 1\}$, and is commuted through the set of $\SWAP$ gates $\{\SWAP_{m_1, n_1}, \SWAP_{m_2, n_2}, \dots, \SWAP_{m_{i - 1}, n_{i - 1}}\}$. This has the effect of reassigning a subset of its targets $\{n_{1}, n_2, \dots, n_{i - 1}\}$ to be those qubits $\{m_1, m_2, \dots, m_{i-1}\}$. Thus, its targets will always be given by the set of qubits $\{m_j\}_j$, marked in blue, and the unmarked qubits between $m_i$ and $n_i$, $\{m_i + 1, m_i + 2, \dots, n_i - 1\} \setminus (\{m_j\}_j \cup \{n_j\}_j)$, as is seen in Fig.~\ref{fig:commuted-circuit}. 
Similarly, in circuit $(2)$, the fanout with control $m_i$ originates from the second fanout of the $\fSWAP_{m_i, n_i}$ operation in the naive circuit (e.g.~Fig.~\ref{fig:starting-circuit}) with control $m_i$ and targets $\{m_i + 1, m_i + 2, \dots, n_i - 1\}$. It is commuted through the set of $\SWAP$ gates $\{\SWAP_{m_1, n_1}, \SWAP_{m_2, n_2}, \dots, \SWAP_{m_{i - 1}, n_{i - 1}}\}$, reassigning the subset of its targets $\{n_1, n_2, \dots, n_{i  - 1}\}$ to the qubits $\{m_1, m_2, \dots, m_{i - 1}\}$. Thus, its targets in circuit (2) will be given by the set of blue qubits $\{m_j\}_j \setminus \{m_i\}$ and every unmarked qubit between $m_i$ and $n_i$, $\{m_i + 1, m_i + 2, \dots, n_i - 1\} \setminus (\{m_j\}_j \cup \{n_j\}_j)$ (e.g.~Fig.~\ref{fig:commuted-circuit}).%

\begin{figure}[h]
\centering
\includegraphics[]{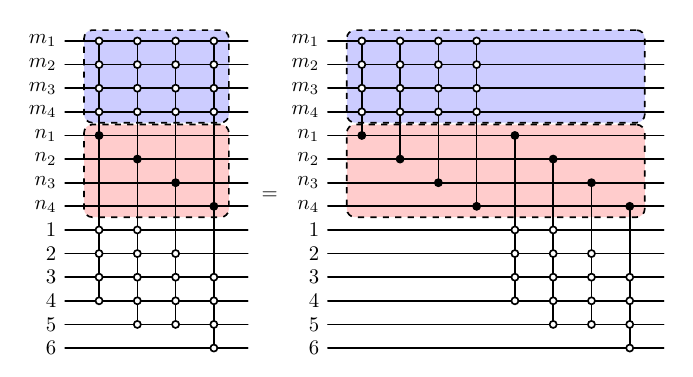}
\caption{Circuit (1) from Fig.~\ref{fig:commuted-circuit} after reordering the qubits to place the %
qubits in $\{m_i\}_i$ and $\{n_i\}_i$ at the top of the circuit.\label{fig:upgates}}
\end{figure}

We next simplify each of the circuits $(1)$ and $(2)$, as shown in the examples of Figs.~\ref{fig:upgates} and \ref{fig:circuit2_simplified}(a), respectively. In both circuits, we first reorder the qubits such that the blue qubits $\{m_i\}_i$ come first, then the red $\{n_i\}_i$, then the unmarked qubits. In circuit $(1)$ (e.g.~Fig.~\ref{fig:upgates}), we split each fanout gate into two fanout gates, one with control $n_i$ and blue targets $\{m_j\}_j$, and the other with control $n_i$ and unmarked targets qubits $\{m_i + 1, m_i + 2, \dots, n_i - 1\} \setminus (\{m_j\}_j \cup \{n_j\}_j)$. Lastly, in circuit $(2)$ (e.g.~Fig.~\ref{fig:circuit2_simplified}(a)), we cancel the targets of each fanout with control $m_i$ on the qubits $\{m_j\}_j \setminus \{m_i\}$ using the circuit identity shown in Fig.~\ref{fig:circuit2_simplified}(b). This leaves only the unmarked qubit targets $\{m_i + 1, m_i + 2, \dots, n_i - 1\} \setminus (\{m_j\}_j \cup \{n_j\}_j)$. Notice that for any $i$, the sets of blue qubits $\{m_j\}_j$, and unmarked qubits between $m_i$ and $n_i$, $\{m_i + 1, m_i + 2, \dots n_i - 1\} \setminus (\{m_j\}_j \cup \{n_j\}_j)$, now form contiguous ranges in the new ordering. Thus, each of the final circuits (1) and (2) (as in Fig.~\ref{fig:upgates} and Fig.~\ref{fig:circuit2_simplified}(a)) are composed of two and one contiguous-range $\CZ$-fanout circuits respectively, which, as proven in Lemma~\ref{lem:contiguous-range-cz}, have depth $O(\log N)$. 
\begin{figure}
\centering
\includegraphics[width=.9\linewidth]{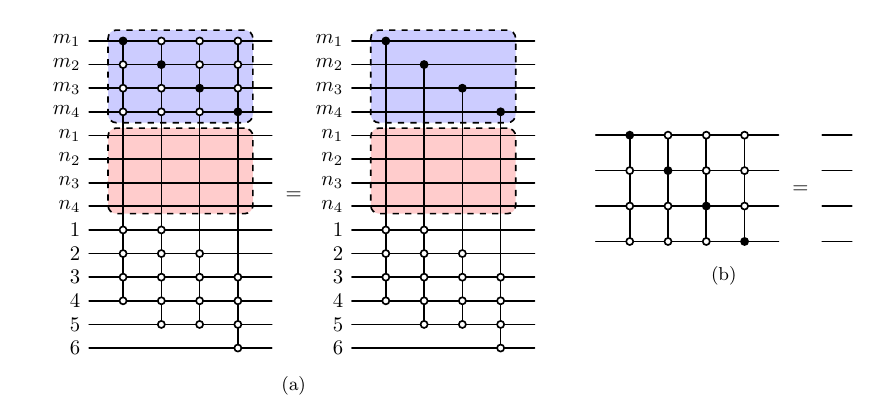}
\caption{(a) Circuit (2) of Fig.~\ref{fig:commuted-circuit} following the same reordering of qubits as in Fig.~\ref{fig:upgates}. The identity in (b) is used to cancel each fanout target on the qubits $\{m_i\}_i$. (b) A sequence of $k$ $\CZ$-fanouts, in which each fanout has a unique control line and $k-1$ other shared targets reduces to the identity, as a $\CZ_{ij}$ gate is generated for every pair of qubits $i \neq j$.}\label{fig:circuit2_simplified}
\end{figure}
\end{proof}

\section{Transformations between Encodings}
\label{sec:ternary-transforms}

In this section, we extend the regime of fermion-to-qubit mappings for which the fermion permutation circuit of Ref.~\cite{maskara2025} and Theorem~\ref{thm:jw-circuit} is applicable. We show that a broad range of mappings, called product-preserving ternary tree mappings, can be transformed into the Jordan-Wigner mapping in depth $O(\log^2 N)$. Consequently, routing in those mappings can also be implemented in depth $O(\log^2 N)$ by first transforming to Jordan-Wigner, applying our routing circuit, and then transforming back to the original map.

Let $\phi_1$ and $\phi_2$ be two fermion-to-qubit mappings over the same fermion and qubit spaces. Since we assume $\dim \mathcal{H}_f = \dim \mathcal{H}_q = 2^N$, they are unitary maps. Then there exists a unitary $U \in \mathcal{B}(\mathcal{H}_q)$ transforming between the two mappings such that $\phi_2 = U\phi_1$, namely $U = \phi_2 \phi_1^{-1}$. This transformation acts on operators via conjugation:
\begin{equation}
\Phi_2(\mathcal{O}_f) = U \Phi_1(\mathcal{O}_f) U^\dagger.
\end{equation}
We construct efficient circuits implementing $U$ when $\phi_1$ and $\phi_2$ lie in a class of fermion-to-qubit mappings known as product-preserving ternary tree mappings. 

Ternary tree mappings, also known as qubit trees in other contexts~\cite{Vlasov_2022}, were introduced in Ref.~\cite{Jiang_2020} to give a mapping that provably minimizes the Pauli weight of $\Phi(a_k)$ averaged over all $k$. However, the initial construction only focused on the mapping of operators and not of states. Subsequent works \cite{Bonsai_2023,chiew2024} extended the construction to study properties of the mapping of states. 
In this paper, we use a formulation of ternary trees adapted from Ref.~\cite{yu2025}.
See any standard classical algorithms text (e.g.~Ref.~\cite{clrs}) for the basics of trees and tree traversal algorithms.

\begin{definition}
A \emph{full ordered ternary tree} is a rooted tree where each node has exactly $0$ or $3$ children. A node with $0$ children is a \emph{leaf} and a node with $3$ children is a \emph{parent}. The $3$ children of a given parent are ordered and named the left, middle, and right child.
\end{definition}

Note that a full ternary tree with $N$ parent nodes always has $2N+1$ leaves.

\begin{definition}[Ternary tree mapping]
\label{def:ternary-tree-mapping}
A ternary tree mapping consists of the following information:
\begin{enumerate}
\item A full ordered ternary tree, which we call the \emph{shape} of the ternary tree mapping.
\item A labeling of the $N$ parents by $1, \ldots, N$. Each labeled parent corresponds to a qubit.
\item A labeling of the $2N+1$ leaves by $\gamma_1, \ldots, \gamma_{2N+1}$. These label the Majorana operators.
\end{enumerate}
\end{definition}

Reference~\cite{Jiang_2020} showed that a ternary tree mapping with $N$ parents gives a fermion-to-qubit mapping from $N$ fermionic modes to $N$ qubits by associating each leaf $\gamma_j$ with a Pauli string as follows. Traverse down the tree from the root to $\gamma_j$. Along the path, taking the left, middle, or right child of qubit $k$ appends to the Pauli string a Pauli $X_k$, $Y_k$, or $Z_k$, respectively. The mapping $\Phi$ of fermionic operators is then defined by $\Phi(a_k) =\gamma_{2k-1} + i\gamma_{2k}$. The last leaf $\gamma_{2N+1}$ is redundant and discarded. Throughout this work, we do not explicitly keep track of the signs of the Majorana Paulis---as detailed in Ref~\cite{chiew2024}, Pauli sign changes give an equivalent encoding up to single-qubit Paulis which take depth $1$, and as such do not affect asymptotic scaling.

\begin{figure}[t]
\includegraphics[width=0.8\linewidth]{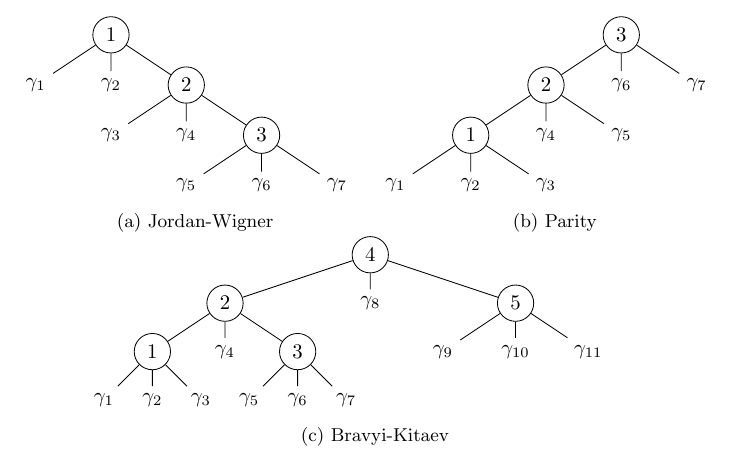}
\caption{Many common fermion-to-qubit mappings, including (a) Jordan-Wigner, (b) Parity, and (c) Bravyi-Kitaev, can be expressed as ternary trees. All three examples shown are binary-shaped ternary tree mappings.}
\label{fig:jw-tree}
\end{figure}

\begin{definition}[Binary subtree]
Given a ternary tree, the binary subtree is the subgraph induced by the root and recursively all left and right children of nodes in the subtree. We call a tree \emph{binary shaped} if all $N$ qubit nodes lie in the binary subtree.
\end{definition}

The Jordan-Wigner, Bravyi-Kitaev, and Parity (defined as $\phi_P \ket{x_0, \dots, x_{N-1}}_f = \ket{p_0, \dots, p_{N-1}}$ with $p_k$ given by Eq.~\eqref{eq:parity}) mappings can all be represented as binary-shaped ternary tree mappings~\cite{Bonsai_2023}, as shown in Fig.~\ref{fig:jw-tree}. Furthermore, in all three examples, the qubits are always labeled via an inorder traversal, and the leaves are labeled in order from left to right. (Recall that an inorder traversal is an ordering of nodes where we first recursively perform an inorder traversal of the left subtree, then visit the root, and then recursively perform an inorder traversal of the right subtree.)

To design circuits that transform between mappings, we first recall the effects of certain gates on ternary tree mappings.

\begin{lemma}[\cite{vlasov2024trees, yu2025}]
\label{lem:tree-ops}
Let $\phi$ be a ternary tree mapping. Then for each $U$ below, $U\phi$ is another ternary tree mapping related to $\phi$ as described.
\begin{enumerate}[(a)]
\item $U = \SWAP_{jk}$: Swap the qubit labels $j$ and $k$ in the tree.
\item $U = \CNOT_{jk}$, where node $j$ is the left child of node $k$: Perform the %
right tree rotation of qubit node $j$ about node $k$ in the binary subtree, preserving the parents of middle children. The right tree rotation is defined as shown in Fig.~\ref{fig:tree-rotation}.
\item $U = \CNOT_{jk}$, where $k$ is the right child of node $j$: Perform the left tree rotation of qubit node $k$ about node $j$ in the binary subtree. This is the inverse of (b) and is defined as shown in Fig.~\ref{fig:tree-rotation}.
\item $U = S_k$, where $S = \pmat{1 &  0 \\ 0 & i}$ is the phase gate: Swap the left and middle children (carrying along their subtrees) of the node labeled $k$. %
\end{enumerate}
\end{lemma}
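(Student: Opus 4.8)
The plan is to verify each of the four cases (a)--(d) by tracking how the claimed unitary $U$ transforms the Pauli strings assigned to the leaves $\gamma_j$, and checking that the result coincides with the Pauli strings read off from the modified tree. Recall that each leaf's Pauli string is obtained by concatenating one single-qubit Pauli per qubit node on the root-to-leaf path, with $X$, $Y$, $Z$ for left, middle, right child respectively. Since conjugation by $U$ maps each $\Phi(\gamma_j)$ to $U\Phi(\gamma_j)U^\dagger$, and since the ternary tree mapping is determined (up to the overall labeling, which $U$ may permute) by the collection of leaf Pauli strings, it suffices to confirm that the transformed strings are exactly those of the tree described in each case.

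For (a), $\SWAP_{jk}$ simply exchanges the roles of qubits $j$ and $k$ in every Pauli string, which is precisely swapping their labels in the tree; this is immediate. For (d), conjugation by $S_k$ fixes $Z_k$ and swaps $X_k \leftrightarrow Y_k$ (up to sign, which we ignore per the stated convention), so any leaf whose path takes the left child of $k$ now behaves as if it took the middle child and vice versa -- exactly swapping the left and middle subtrees of node $k$. Again the strings on all other leaves are unchanged, so this is direct.

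The substantive cases are (b) and (c), which are inverses of each other, so it suffices to do (b). Here node $j$ is the left child of node $k$, and conjugating by $\CNOT_{jk}$ acts on the two-qubit Pauli factors supported on $\{j,k\}$ by the standard rule ($X_k \mapsto X_k$, $Z_j \mapsto Z_j$, $X_j \mapsto X_j X_k$, $Z_k \mapsto Z_j Z_k$, and products accordingly). I would enumerate the leaves by which of the subtrees hanging off the edge between $j$ and $k$ they descend into -- the left, middle, right children of $j$; the middle, right children of $k$; and leaves outside the subtree rooted at $k$ -- and check in each case that the transformed Pauli string matches the corresponding leaf's string in the right-rotated tree of Fig.~\ref{fig:tree-rotation}, paying particular attention to the claim that the parents of middle children are preserved (the middle child of $j$ and the middle child of $k$ stay attached to the same qubit after rotation). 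The main obstacle is purely bookkeeping: one must be careful that the rotation reassigns \emph{which node is the parent} of each subtree correctly and that the induced change in path-prefix (gaining or losing a factor at qubit $j$ or $k$) agrees with the $\CNOT$ conjugation rule on every one of the cases. Since the signs of Majorana Paulis are explicitly not tracked, no sign-chasing is needed, which removes the only other source of difficulty. Finally, I would note that (b) and (c) being mutually inverse is consistent with the left and right tree rotations being mutually inverse operations, providing a consistency check.
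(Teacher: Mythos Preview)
The paper does not prove this lemma; it is stated with citations to Refs.~\cite{vlasov2024trees, yu2025} and used as a black box, so there is no paper proof to compare against. Your direct-verification strategy is the standard and correct approach: track the two-qubit Pauli factor on $\{j,k\}$ for each of the five subtrees $A,B,C,D,E$ (and for leaves above $k$), apply the $\CNOT_{jk}$ conjugation rules, and match the results against the root-to-leaf strings in the rotated tree. Carrying this out explicitly gives, for instance, $X_jX_k\mapsto X_j$ (subtree $A$), $Y_jX_k\mapsto Y_j$ (subtree $B$), $Z_jX_k\mapsto Z_jX_k$ (subtree $C$), $Y_k\mapsto Z_jY_k$ (subtree $D$), $Z_k\mapsto Z_jZ_k$ (subtree $E$), which are exactly the factors read off from the right-hand tree in Fig.~\ref{fig:tree-rotation}; in particular the middle subtrees $B$ and $D$ remain attached to $j$ and $k$ respectively, confirming the ``preserving the parents of middle children'' clause. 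Your handling of (a) and (d) is also correct, and noting that (c) is the inverse of (b) is the right way to dispatch it. There is no gap.
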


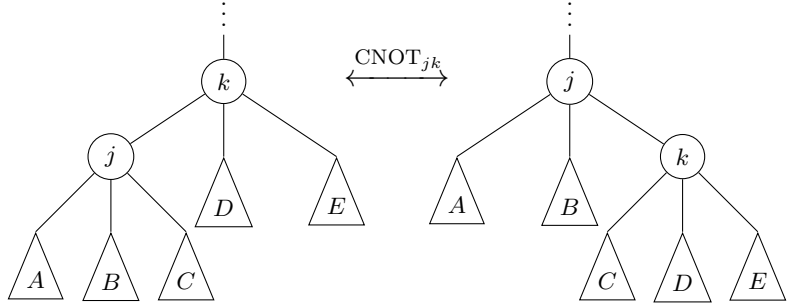
\begin{figure}[h]
\centering
	\begin{tikzpicture}
	[
	style = {level distance = 1cm}, 
	level 2/.style = {sibling distance = 1.5cm},
	level 3/.style = {sibling distance = 1cm},
	internal/.style={draw, circle},
	subtree/.style={draw, isosceles triangle,shape border rotate=90,anchor=apex,minimum size=5mm,text height=0.5ex, text depth=0.25ex},
	baseline=(current bounding box.center)
	]
	\node [xshift=-2.3cm] {$\vdots$} child {
	node [internal] {$k$}
		child {node [internal] {$j$}
			child [child anchor=apex] {node [subtree] {$A$}}
			child [child anchor=apex] {node [subtree] {$B$}}
			child [child anchor=apex] {node [subtree] {$C$}}
		} 
		child {node [subtree] {$D$}}
		child [child anchor=apex] {node [subtree] {$E$}}
	};
	\node [yshift=-0.8cm] {\large $\xleftrightarrow{\CNOT_{jk}}$};
	\node [xshift=2.3cm] {$\vdots$} child {
	node [internal] {$j$}
		child [child anchor = apex] {node [subtree] {$A$}}
		child [child anchor = apex] {node [subtree] {$B$}}
		child {node [internal] {$k$}
			child [child anchor = apex] {node [subtree] {$C$}}
			child [child anchor = apex] {node [subtree] {$D$}}
			child [child anchor = apex] {node [subtree] {$E$}}
		} 
	};
	\end{tikzpicture}
\caption{Tree rotation corresponding to the adjoint action of the CNOT$_{jk}$ transformation. The triangles can be leaves or contain further subtrees.}
\label{fig:tree-rotation}
\end{figure}

\begin{theorem}[Folklore]
There exists a $\CNOT$ circuit of depth $O(\log N)$ that transforms between the Jordan-Wigner, Bravyi-Kitaev, and Parity mappings.
\label{thm:bk-jw}
\end{theorem}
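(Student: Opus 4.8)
The plan is to argue entirely on the level of ternary-tree \emph{shapes} and to invoke Lemma~\ref{lem:tree-ops}. All three of the Jordan-Wigner, Bravyi-Kitaev, and Parity mappings are binary-shaped ternary tree mappings in which the $N$ qubit nodes are labeled by an inorder traversal and the $2N+1$ leaves are labeled left to right (as noted just below Fig.~\ref{fig:jw-tree}). Hence each of them is completely determined by the shape of its binary subtree, and a unitary converting one into another can be assembled from the moves of Lemma~\ref{lem:tree-ops}: $\SWAP_{jk}$ relabels qubit nodes, while $\CNOT_{jk}$ between a node and its current left or right child performs a binary-tree rotation (Fig.~\ref{fig:tree-rotation}). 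Crucially, tree rotations preserve the inorder order of \emph{all} nodes and leaves, so since source and target already share the same labeling, no relabeling $\SWAP$s are needed and a pure $\CNOT$ circuit suffices. It therefore remains to transform between the three binary-tree shapes using rotations, organized into $O(\log N)$ layers each consisting of rotations whose supporting qubit pairs are pairwise disjoint (so that each layer has circuit depth $1$).

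Next I would identify the shapes: the Jordan-Wigner tree is the right spine (node $k$ is the right child of node $k-1$, with node $1$ at the root), the Parity tree is its mirror image the left spine, and the Bravyi-Kitaev tree is a balanced binary tree on the inorder-labeled nodes $1,\dots,N$. It thus suffices to give (i) a logarithmic-depth rotation schedule that turns the right spine into the balanced tree (for Jordan-Wigner $\leftrightarrow$ Bravyi-Kitaev), and (ii) one that reverses a spine (for Jordan-Wigner $\leftrightarrow$ Parity); composing these and their inverses through the Jordan-Wigner form then connects any pair among the three. For (ii) one may alternatively observe that $\phi_P = P\,\phi_{\JW}$ with $P$ the parity transform of Eq.~\eqref{eq:parity}, which is already known to admit a $\CNOT$ circuit of depth $O(\log N)$.

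The substantive step is (i), which I would handle by recursive halving; assume for concreteness $N = 2^n$. Maintain the invariant that after layer $j$ the current tree is a right spine on exactly the nodes that are multiples of $2^{j}$, where each such node $m$ carries, as the root of its left subtree, a balanced tree on the block $\{m - 2^{j} + 1, \dots, m-1\}$; for $j=0$ this is vacuously the Jordan-Wigner right spine. In layer $j+1$, for every valid $i$ apply in parallel the $\CNOT_{\,2^{j+1}i - 2^{j},\ 2^{j+1}i}$ gate, which by Lemma~\ref{lem:tree-ops}(c) performs the left rotation pulling node $2^{j+1}i$ above its current parent $2^{j+1}i - 2^{j}$. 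Distinct such rotations act on the disjoint qubit pairs $\{2^{j+1}i - 2^{j},\ 2^{j+1}i\}$, hence commute and form a depth-$1$ layer; a short check of the rotation rule shows that node $2^{j+1}i - 2^{j}$ absorbs $2^{j+1}i$'s old left block as its new right subtree, merging two size-$(2^{j}-1)$ balanced blocks under their separating node into a balanced block of size $2^{j+1}-1$, which restores the invariant with $j+1$ in place of $j$. After $n$ layers the tree is balanced, so the full conversion—and likewise its inverse and the composites through Jordan-Wigner—has depth $O(\log N)$.

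The main obstacle I anticipate is bookkeeping rather than conceptual: verifying that the pair-and-pull-up schedule lands on exactly the Bravyi-Kitaev tree as drawn in Fig.~\ref{fig:jw-tree} (in particular handling general $N$ that is not a power of two and any asymmetry introduced at the last level), and confirming at each layer that the selected $\CNOT$s are simultaneously legal rotations—each between a node and its \emph{current} child—while remaining supported on disjoint qubits. Since the result is folklore and the moves of Lemma~\ref{lem:tree-ops} are tailored precisely to this purpose, I expect these checks to go through routinely; if one prefers to bypass re-deriving the balanced-tree schedule, (i) can instead be inherited from the standard $O(\log N)$-depth $\CNOT$ circuit for the Bravyi-Kitaev occupation transform applied in the Jordan-Wigner basis, in the same way that (ii) reduces to the circuit for $P$.
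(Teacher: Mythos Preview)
Your proposal is correct and takes essentially the same approach as the paper: both use parent--child $\CNOT$s as binary-tree rotations (Lemma~\ref{lem:tree-ops}), parallelized into $O(\log N)$ layers of disjoint gates, with the inorder invariant handling all labeling. The only difference is cosmetic: you build from Jordan--Wigner toward Bravyi--Kitaev by recursively pairing spine nodes into balanced subtrees, whereas the paper runs the inverse, flattening Bravyi--Kitaev to Jordan--Wigner by rotating every left child of a right-spine node onto the spine (and reaches Parity by mirroring that procedure rather than by invoking the standalone $O(\log N)$ circuit for $P$).
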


\begin{proof}
First, we demonstrate a transformation between Bravyi-Kitaev and Jordan-Wigner. We construct $O(\log N)$ layers of disjoint $\CNOT$ gates to transform the Bravyi-Kitaev tree to the Jordan-Wigner tree.
Call the \emph{right spine} of a tree the root and all nodes which are right descendants of nodes in the spine. In each layer, iterate through every node $k$ on the right spine with a left child $j$ and perform $\CNOT_{jk}$ to rotate node $j$ onto the right spine. For each node not already on the right spine, its distance to the right spine decreases by $1$ in each iteration. For Bravyi-Kitaev, the farthest starting node from the right spine has distance $O(\log N)$, so this procedure will terminate in $O(\log N)$ iterations. Since tree rotations leave invariant the inorder traversal, the resulting ternary tree also has qubits and leaves each labeled via an inorder traversal, thus exactly matching the Jordan-Wigner tree. An example of this protocol is illustrated in Fig.~\ref{fig:bk-jw}.

The transformation from Bravyi-Kitaev to Parity is simply the mirrored version of the above, with the left spine and left rotations instead. These transformations are unitary and invertible, so by composing Bravyi-Kitaev-to-Parity with the inverse of Bravyi-Kitaev-to-Jordan-Wigner, we obtain a log-depth circuit transforming between Jordan-Wigner and Parity.
\end{proof}

\begin{figure}[t]
\centering
\includegraphics{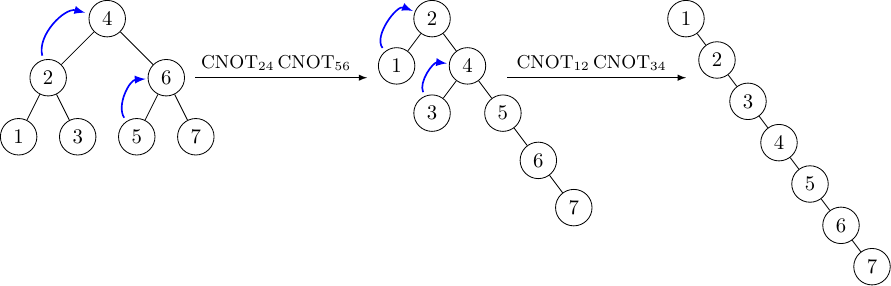}
\caption{For $N = 7$ qubits, a protocol transforming the Bravyi-Kitaev tree into the Jordan-Wigner tree using $\floor{\log_2 N}$ layers of $\CNOT$ gates. The Majorana leaves are not drawn but are always implicitly labeled in order.}
\label{fig:bk-jw}
\end{figure}

We remark that, while Theorem~\ref{thm:bk-jw} was formulated in the context of fermion-to-qubit mappings, the circuits can also transform qubit states in the same manner, outside the context of any fermions. In particular, the same circuit constructed in Theorem~\ref{thm:bk-jw} to transform Jordan-Wigner to Parity also gives a log-depth circuit implementing the parity transform [Eq.~\eqref{eq:parity}] on qubit states.
We note that this reproduces the same circuit as the well-known divide-and-conquer $\CNOT$ circuit for parity.

\begin{algorithm}[H]
    \caption{Circuit transforming any binary-shaped ternary tree into the Jordan-Wigner shape}
    \label{alg:jw}
    
    \begin{algorithmic}[1]
	\While {tree is not Jordan-Wigner shape}
		\ForAll {nodes $k$ on the right spine}
			\If {$k$ has a non-leaf left child $j$}
				\State Apply $\CNOT_{jk}$
			\EndIf
		\EndFor
	\EndWhile
    \end{algorithmic}
\end{algorithm}

We also remark that the technique of rotating nodes onto the right spine is not restricted to starting from Bravyi-Kitaev. Rather, this technique can be applied to transform any binary-shaped ternary tree to the Jordan-Wigner shape and is detailed in Algorithm~\ref{alg:jw}.

\begin{lemma}
Let $T$ be a binary-shaped ternary tree mapping and let $d$ be the furthest distance from any node to the right spine. There exists a $\CNOT$ circuit of depth $d$ that transforms between $T$ and the Jordan-Wigner tree shape.
\label{lem:log-jw}
\end{lemma}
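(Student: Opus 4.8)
\textbf{Proof proposal for Lemma~\ref{lem:log-jw}.}

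The plan is to analyze Algorithm~\ref{alg:jw} directly and show that (i) it terminates after exactly $d$ iterations of the outer \texttt{while} loop, (ii) each iteration consists of a single layer of disjoint $\CNOT$ gates, hence contributes depth $1$, and (iii) when it halts the tree is in Jordan-Wigner shape. The correctness of each individual gate as a tree operation is already guaranteed by Lemma~\ref{lem:tree-ops}(c): applying $\CNOT_{jk}$ when $j$ is the left child of a right-spine node $k$ performs a right tree rotation that pulls $j$ onto the right spine while leaving the inorder traversal of the tree invariant. Since the Jordan-Wigner tree is the unique binary-shaped ternary tree with a right spine of length $N$ (equivalently, no node has a non-leaf left child), and since tree rotations preserve the inorder labeling of both qubits and Majorana leaves, it suffices to track how the tree \emph{shape} evolves.

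The key structural claim is a potential-function argument on $d$. Define, for each qubit node $v$, $\mathrm{dist}(v)$ to be the number of ``left edges'' on the unique root-to-$v$ path in the binary subtree, i.e.\ the distance from $v$ to the right spine; thus $d = \max_v \mathrm{dist}(v)$. First I would observe that in one iteration of the outer loop, the set of right-spine nodes with a non-leaf left child is an antichain in the appropriate sense: distinct such nodes $k, k'$ on the right spine, together with their left children $j, j'$, involve disjoint qubit sets (if $k$ is above $k'$ on the spine, then $j$ and its subtree are strictly ``to the left'' of $k'$, so the four nodes $j,k,j',k'$ are distinct, and the rotation at $(j,k)$ touches only $\{j,k\}$ plus reparenting of subtrees that do not contain $j',k'$). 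Hence all the $\CNOT$s in one pass of the \texttt{forall} loop commute and form a single depth-$1$ layer. Next I would show that after one such layer, $\max_v \mathrm{dist}(v)$ decreases by exactly $1$: every node originally on the right spine stays on it, every node at distance $1$ gets rotated onto the spine (distance $\to 0$), and for a node $v$ at distance $\ell \ge 1$, the rotation applied at the spine node above $v$'s ``first left turn'' shortens that path by one left edge, so $\mathrm{dist}(v) \to \ell - 1$; no node's distance increases because a right rotation only ever moves the rotated node up and shifts the subtrees $A,B,C,D,E$ of Fig.~\ref{fig:tree-rotation} by at most reattaching them to a node one step higher. Therefore after $d$ layers every node has $\mathrm{dist}(v) = 0$, i.e.\ all $N$ qubit nodes lie on the right spine, which is precisely the Jordan-Wigner shape; and the loop cannot halt earlier since as long as some node has $\mathrm{dist}(v) \ge 1$ there is a right-spine node with a non-leaf left child, so the \texttt{while} guard is true.

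The main obstacle I anticipate is making the ``distance decreases by exactly one in each layer'' claim fully rigorous, because a single pass of the \texttt{forall} loop performs \emph{several} rotations at different spine nodes simultaneously, and one must check that these do not interfere in a way that either fails to decrease some node's distance or accidentally decreases it by more than one (which would still be fine for the depth bound but would break the ``exactly $d$ iterations, hence depth $d$'' matching). The clean way to handle this is to fix a node $v$ with $\mathrm{dist}(v) = \ell$, let $k$ be the lowest right-spine ancestor of $v$ and $j$ its left child (so $j$ is an ancestor of $v$ or equals $v$), note that $(j,k)$ is one of the pairs processed this round, and verify using Fig.~\ref{fig:tree-rotation} that after the rotation $j$ sits on the right spine and the root-to-$v$ path has lost exactly the one left edge $k \to j$ while all other edges on that path are unchanged as left/right edges; a short case check ($v = j$ versus $v$ in subtree $A$, $B$, or $C$) closes it. Since we only need the upper bound $d$ on the depth for the statement as written, even the weaker claim ``$\max_v \mathrm{dist}(v)$ strictly decreases each layer, and the loop runs at most $d$ times'' suffices, so I would present the argument at that level of detail and remark that $d$ iterations are also necessary.
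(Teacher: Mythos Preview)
Your approach is exactly the paper's: run Algorithm~\ref{alg:jw}, argue that each pass of the \texttt{while} loop is a single depth-$1$ layer of disjoint $\CNOT$s, and that the maximum distance to the right spine drops by $1$ per layer. That high-level plan is correct and complete.

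However, there is a genuine slip in your potential function. You define $\mathrm{dist}(v)$ as the number of \emph{left} edges on the root-to-$v$ path and then assert this equals the distance from $v$ to the right spine. It does not: the distance to the right spine is the number of edges between $v$ and its lowest right-spine ancestor, and those edges can be a mix of left and right. For instance, if the root $r$ has left child $a$ and $a$ has right child $b$, then $b$ has only one left edge on its root path but is at graph distance $2$ from the spine. This mismatch breaks your monotone-decrease claim precisely in the case you flagged as needing a ``short case check'': for $v$ in subtree $C$ of Fig.~\ref{fig:tree-rotation}, the left edge $k \to j$ is removed but a \emph{new} left edge $k \to C$ is created, so the left-edge count is unchanged---your assertion that ``all other edges on that path are unchanged as left/right edges'' is false here. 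With the left-edge count as potential, the algorithm can run for strictly more iterations than your $d$, and the argument stalls.

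The fix is simply to use the paper's intended definition, namely graph distance to the right spine. Then your case analysis goes through cleanly: for $v \in A$ the distance drops from $(\text{depth in }A)+2$ to $(\text{depth in }A)+1$; for $v = j$ it drops from $1$ to $0$; and for $v \in C$ it drops from $(\text{depth in }C)+2$ (via $j$ then $k$) to $(\text{depth in }C)+1$ (directly to $k$, which remains on the spine). A minor side note: the right rotation you invoke is Lemma~\ref{lem:tree-ops}(b), not (c).
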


\begin{proof}
The circuit is constructed by Algorithm~\ref{alg:jw}. For each node not already on the right spine, its distance to the right spine decreases by $1$ in each iteration of the \textbf{while} loop. Therefore the \textbf{while} loop runs $d$ times. Within each iteration, all $\CNOT_{jk}$ gates can be applied in parallel as they act on disjoint qubits. Therefore the resulting circuit has depth $d$.
\end{proof}

\begin{theorem}
\label{thm:binary-shaped}
There exists a $\CNOT$ circuit of depth $O(\log N)$ transforming between any two binary-shaped ternary trees on $N$ qubits sharing the same ordering of Majorana leaf labeling.
\end{theorem}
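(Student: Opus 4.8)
The plan is to reduce the general binary-shaped-to-binary-shaped transformation to two applications of Lemma~\ref{lem:log-jw}, using the Jordan-Wigner shape as a canonical intermediate. Given two binary-shaped ternary trees $T_1$ and $T_2$ with the same Majorana leaf ordering, I would first observe that since both trees are binary-shaped, the circuit from Lemma~\ref{lem:log-jw} transforms each of them to the Jordan-Wigner shape. Crucially, since tree rotations preserve the inorder traversal of the binary subtree, and since we may relabel qubit nodes freely using $\SWAP$ gates (Lemma~\ref{lem:tree-ops}(a)) in constant additional depth, the resulting Jordan-Wigner-shaped trees coming from $T_1$ and $T_2$ can be made to have identical qubit labelings (namely the inorder labeling). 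Then $U = U_2^{-1} U_1$, where $U_i$ is the transformation taking $T_i$ to Jordan-Wigner, is the desired circuit: it has depth at most the sum of the depths of $U_1$ and $U_2$.

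The one quantitative point that needs care is that the depth bound in Lemma~\ref{lem:log-jw} is $d$, the maximum distance of any node to the right spine, and a priori this could be as large as $\Omega(N)$ for a pathological binary-shaped tree (e.g.\ a left-leaning path). So the second step is to argue that we never actually need to start from a bad tree: we can first \emph{balance} an arbitrary binary-shaped tree. The cleanest route is to note that the Bravyi-Kitaev tree is a balanced binary-shaped tree with $d = O(\log N)$, so it suffices to show every binary-shaped tree can be brought to Bravyi-Kitaev shape in depth $O(\log N)$. Alternatively---and this is likely the intended argument---one shows directly that Algorithm~\ref{alg:jw} applied to \emph{any} binary-shaped tree terminates in $O(\log N)$ rounds after an initial $O(\log N)$-depth balancing pass, or that any binary-shaped tree can be transformed to a balanced one by a standard divide-and-conquer sequence of rotations of logarithmic depth. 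I would invoke the well-known fact that any binary tree on $N$ nodes can be rebalanced to depth $O(\log N)$ using $O(\log N)$ parallel rounds of rotations (this is the tree-rotation analogue of the parallel list-ranking / pointer-doubling argument), so that after rebalancing, $d = O(\log N)$ and Lemma~\ref{lem:log-jw} finishes the job.

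Putting it together, the construction is: (i) rebalance $T_1$ to a tree with $d = O(\log N)$ using $O(\log N)$ depth; (ii) apply Lemma~\ref{lem:log-jw} to reach Jordan-Wigner shape with the inorder qubit labeling, in depth $O(\log N)$; (iii) do the same for $T_2$; (iv) invert the $T_2$ circuit and compose. Total depth $O(\log N)$. I would also remark that the leaf-ordering hypothesis is what guarantees the two Jordan-Wigner targets are genuinely the same mapping (not merely the same shape), since with matching leaf order and matching inorder qubit labels the two ternary tree mappings are identical up to the single-qubit Pauli sign ambiguity already set aside earlier.

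The main obstacle I anticipate is step (i): making the rebalancing both correct and genuinely $O(\log N)$ depth using only the rotation moves of Lemma~\ref{lem:tree-ops}(b)--(c), which only rotate about nodes \emph{on the right spine} (for the version stated). A naive rebalancing might want to rotate about arbitrary internal nodes. The fix is that rotations can be localized: one can recurse into subtrees, and because disjoint subtrees occupy disjoint qubit ranges, rebalancing passes on different subtrees run in parallel, so the depth is governed by the recursion depth $O(\log N)$ times the $O(\log N)$ cost of one pass---wait, that would give $O(\log^2 N)$. To keep it at $O(\log N)$ one must be more clever, e.g.\ interleave the balancing of all scales simultaneously, or simply observe that bringing any tree to the \emph{specific} balanced Bravyi-Kitaev shape can be done by the reverse of a single known $O(\log N)$-depth circuit. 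I would therefore present the argument via Bravyi-Kitaev as the hub: transform $T_i \to$ Bravyi-Kitaev (this is the content that needs the real work, and I would handle it by showing Algorithm~\ref{alg:jw}-style spine rotations, applied recursively but with all recursion levels overlapped, terminate in $O(\log N)$ total depth), then $T_1 \to \mathrm{BK} \to \mathrm{JW}$ and reverse for $T_2$, all at depth $O(\log N)$.
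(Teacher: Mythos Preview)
Your high-level plan matches the paper's: reduce to Jordan-Wigner via a balancing step followed by Lemma~\ref{lem:log-jw}, then compose with the inverse for the second tree, fixing qubit labels by a depth-$2$ layer of $\SWAP$s. The paper does exactly this. So the structure is right.

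The gap is in your balancing step. You invoke a ``well-known fact'' that any binary tree can be rebalanced to depth $O(\log N)$ in $O(\log N)$ parallel rounds of rotations, but this is not a standard result you can simply cite---it is the technical heart of the theorem, and the paper devotes most of the proof to establishing it. Your alternative of routing through Bravyi-Kitaev is circular: transforming an arbitrary binary-shaped tree to Bravyi-Kitaev is exactly as hard as transforming it to any other fixed balanced shape, and you never say how to do it. Your worry that a naive recursion gives $O(\log^2 N)$ is well-founded, and you do not actually overcome it.

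The paper's argument is as follows. Define the \emph{halving depth} $\hd(A)$ of a node as the minimum number of levels below $A$ after which every descendant carries at most half of $A$'s leaf-weight, and the \emph{heavy path} from $A$ as the chain of descendants with weight exceeding $\tfrac12\wt(A)$. In each round, at every node $A$ whose depth is a multiple of $3$ and with $\hd(A)\ge 4$, apply one or two rotations (chosen by the left/right pattern of the first two heavy-path steps) that compress three heavy-path steps into at most two. One then shows that the halving depth of any subtree satisfies a recurrence of the form $\hd(Y)\le\tfrac{2d+5}{3}$ after a round, so it decays geometrically to $O(1)$ in $O(\log N)$ rounds; a tree with bounded halving depth everywhere has standard depth $O(\log N)$, and Lemma~\ref{lem:log-jw} finishes. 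This is the missing idea in your proposal.

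One further correction: you write that Lemma~\ref{lem:tree-ops}(b)--(c) ``only rotate about nodes on the right spine.'' That is a misreading. The lemma allows a $\CNOT$ rotation at \emph{any} left- or right-child pair in the binary subtree; it is only Algorithm~\ref{alg:jw} that happens to rotate along the right spine. The balancing algorithm above freely uses rotations at interior nodes, and since rotations at nodes in disjoint depth-$3$ windows touch disjoint qubit triples, each round is genuinely constant depth.
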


\begin{proof}
For this proof, we only keep track of the qubit nodes; the leaves are always attached in order since tree rotations leave the inorder traversal invariant.

It suffices to show that there exists a circuit of depth $O(\log N)$ transforming any binary-shaped tree to the Jordan-Wigner tree shape. To do so, we first use parallel tree rotations to transform an arbitrary binary tree into a binary tree with depth $O(\log N)$. The depth upper bounds the maximum distance of any node from the right spine, so by Lemma~\ref{lem:log-jw} we can transform the log-depth tree into the Jordan-Wigner shape in depth $O(\log N)$.
By the inorder invariant of tree rotations, the Majorana leaves remain labeled in the same order. The only remaining degree of freedom is in the qubit labels, which can be fixed in depth two by decomposing the desired qubit permutation into two layers of $\SWAP$ gates~\cite{alon1994routing}.

To attain a log-depth binary-shaped tree, we introduce some notation to identify the imbalances in the tree. The \emph{weight} of a node $A$, denoted $\wt(A)$, is the number of leaves in the subtree rooted at $A$. The \emph{halving depth} of $A$, denoted $\hd(A)$, is the minimum number of levels below $A$ such that every descendant $B$ of $A$ in that level has at most half the weight, i.e., $\wt(B) \leq \tfrac{1}{2} \wt(A)$. 

There is clearly at most one node per level with weight $>\tfrac{1}{2} \wt(A)$---there is not enough weight in the whole subtree for more than one. In particular, some $B$ exists $\hd(A)-1$ steps below $A$ such that $\wt(B) > \tfrac{1}{2} \wt(A)$. By inclusion, all the ancestors of $B$ up to $A$ are also above the threshold. Inspired by this, we define the \emph{heavy path from $A$} to be the sequence of descendants (of length $\hd(A)$) that have weight $> \tfrac{1}{2} \wt(A)$.

The algorithm runs $\bigo{\log N}$ \emph{rounds}. In each round, we consider all nodes where the depth is a multiple of $3$, e.g., the root is depth $0$, its great-grandchildren are depth $3$, and so on. For each of these nodes $A$, if $\hd(A) \geq 4$ then the heavy path begins with nodes $A, A', A'', A'''$. The algorithm will shorten the heavy path using tree rotations, but the choice of rotation depends on which child $A'$ is to $A$, and similarly on $A''$ relative to $A'$. Table~\ref{tab:heavy-rotations} summarizes which rotation to use for each case.  
\begin{table}[b]
\begin{tabular}{|cc|cc|}
	\hline
	& & \multicolumn{2}{c|}{$A'$ to $A''$} \\
	& & left & right \\
	\hline
	\multirow{2}{*}{$A$ to $A'$} & left & $\CNOT_{A''A'}$ (Fig.~\ref{fig:tree-rotation}) & $\CNOT_{A''A} \CNOT_{A'A''}$ (Fig.~\ref{fig:zigzag-rotations}(a)) \\
	& right & $\CNOT_{AA''} \CNOT_{A''A'}$ (Fig.~\ref{fig:zigzag-rotations}(b)) & $\CNOT_{A'A''}$ (Fig.~\ref{fig:tree-rotation}) \\
	\hline
\end{tabular}
\caption{The choice of tree rotations in Algorithm~\ref{alg:binary-shaped}.}
\label{tab:heavy-rotations}
\end{table}
The key property is that in all cases, we reduce the length of three steps in a heavy path to two or less. %

\begin{figure}
	\includegraphics[width=\linewidth]{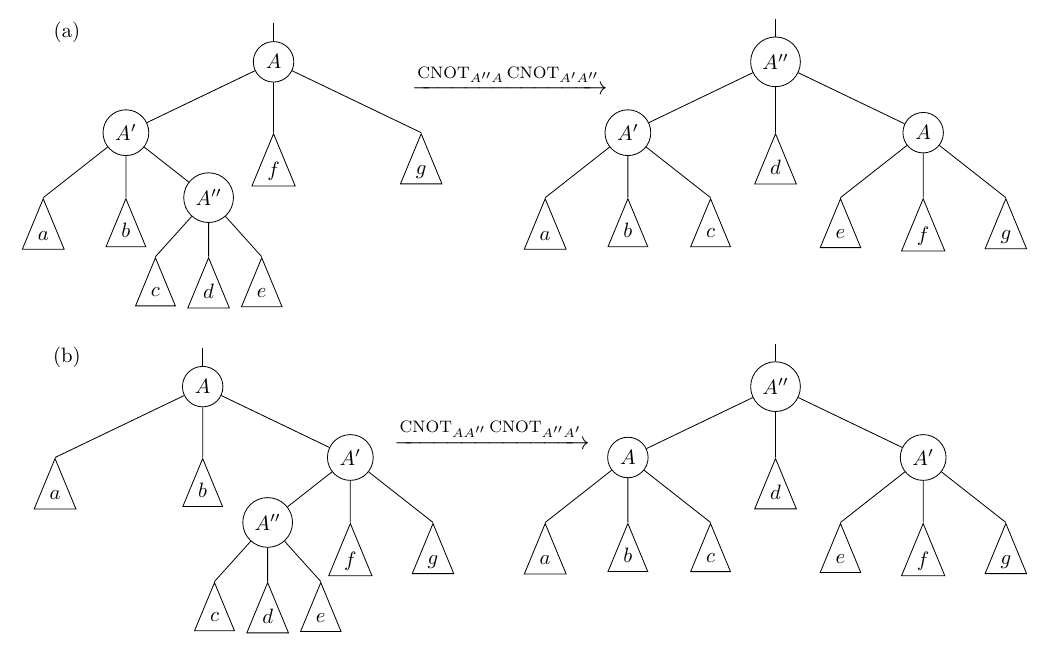}
	\caption{In Theorem~\ref{thm:binary-shaped}, depending on the shape of the heavy path, these rotations may be applied. In both cases, $A'''$ can be the root of either subtree $c$, $d$ or $e$.} 
	\label{fig:zigzag-rotations}
\end{figure}

Now consider how a round of the algorithm affects the halving depth $d = \hd(X)$ of an arbitrary node $X$. The heavy path has $d$ nodes, $X = X^{(1)}$ through $X^{(d)}$. The last $3$ ($X^{(d-2)}$, $X^{(d-1)}$, $X^{(d)}$) are too close to the end to trigger a tree rotation in the algorithm. Of the remaining nodes, at least $\lfloor \tfrac{d}{3} \rfloor-1$ appear on a level that is a multiple of three. %

The tree rotations usually change the root of the subtree; call the new root $Y$. The final element of the original heavy path, $X^{(d)}$, exists in the subtree and still has weight $\wt(X^{(d)}) > \tfrac{1}{2}\wt(Y)$. However, the algorithm has applied no tree rotation that could affect $X^{(d)}$, so its children have weight $\leq \tfrac{1}{2}\wt(Y)$. Hence, the halving depth of $Y$ is its distance from $X^{(d)}$. The distance between the subtree's root and $X^{(d)}$ decreases by $\lfloor \tfrac{d}{3} \rfloor-1 \geq \tfrac{d-5}{3}$ from the tree rotations, %
so 
\[
\hd(Y) \leq d - \frac{d-5}{3} \leq \frac{2 d + 5}{3}.
\]
We conclude from this inequality that the halving depth of any subtree decreases geometrically to a constant. 
The initial halving depth of any subtree is at most the (standard) depth of the tree, and thus bounded by $n$. It follows that after $\bigo{\log N}$ rounds, the halving depth of any subtree is $\bigo{1}$.  

Suppose the halving depth is bounded by a constant $H$ for all subtrees. Then the (standard) depth of the tree is at most $H(\log_2(n) + \bigo{1})$, since the weight is $2n+1$ at the root, halves every $H$ levels, and never goes lower than $3$. Since 
$H = \bigo{1}$,
we conclude that the standard depth is at most $\bigo{\log N}$. 
\end{proof}

\begin{algorithm}[H]
    \caption{Circuit transforming any binary-shaped ternary tree into the Jordan-Wigner shape}
    \label{alg:binary-shaped}
    
    \begin{algorithmic}[1]
	\Procedure{balance-binary}{}
	\While {some node has $\hd \ge 6$}
		\ForAll {nodes $A$ with depth divisible by $3$}
			\If {$\hd(A) \ge 4$}
				\State $A', A'', A''' \gets$ first three nodes in heavy path from $A$
				\State Apply $\CNOT$ rotations according to Table~\ref{tab:heavy-rotations}
			\EndIf
		\EndFor
	\EndWhile
	\EndProcedure

	\State Apply \Call{balance-binary}{} to the starting tree
	\State Apply Algorithm~\ref{alg:jw} on the resulting tree
    \end{algorithmic}
\end{algorithm}

\begin{definition}[Product preservation]
\label{def:product-preserving}
A fermion-to-qubit mapping $\phi$ is product-preserving if $\phi(\ket{\psi}_f)$ is a computational basis state in $\mathcal{H}_q$ for every Fock basis state $\ket{\psi}_f$.
\end{definition}

\begin{theorem}
\label{thm:ternary-jw}
There exists a circuit of depth $O(\log^2 N)$, consisting of $\CNOT$ and $S$ gates, transforming between any two product-preserving ternary tree mappings.
\end{theorem}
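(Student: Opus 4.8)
The plan is to reduce the general case to the binary-shaped case already handled by Theorem~\ref{thm:binary-shaped}. A product-preserving ternary tree mapping need not be binary shaped: some qubit nodes may be reached via middle children. So the first step is to understand the structure forced by product preservation. The key observation is that $\phi$ sends Fock basis states to computational basis states if and only if the operators $\Phi(2a_k^\dagger a_k - 1) = -i\gamma_{2k-1}\gamma_{2k}$, which are the number-operator Paulis, are all diagonal, i.e., products of $Z$s. In the ternary tree picture, $\gamma_{2k-1}$ and $\gamma_{2k}$ are the leaves associated with consecutive Majorana labels, and their product is a $Z$-string exactly when the only node at which the two root-to-leaf paths diverge contributes a $Z$ on the ``upper'' factor and the paths are otherwise identical below. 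Concretely, product preservation forces each pair $(\gamma_{2k-1}, \gamma_{2k})$ to be the left and middle leaves (or subtrees) of a common node, so that their paths agree except for an $X$ versus $Y$ branch at the bottom split; equivalently, the Majorana-leaf labeling is the inorder leaf labeling of the tree and pairs $2k-1, 2k$ are siblings sharing a parent whose third (right) child leads to the rest of the tree. I would state and prove this structural lemma first.

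Given this structure, the second step is to convert the tree into one whose qubit nodes all lie in the binary subtree, using only $\CNOT$ and $S$ gates, in depth $O(\log^2 N)$. The natural move is: whenever a qubit node $k$ sits as the middle child of its parent $p$, use the $S_p$ gate (Lemma~\ref{lem:tree-ops}(d)) to swap the left and middle children of $p$, bringing $k$ into the binary subtree at that level. One must be careful that product preservation is maintained and that these $S$ operations can be parallelized. Because the paired Majorana leaves are siblings, each ``middle slot'' that needs fixing is localized, and $S$ gates on distinct parent nodes commute and act on disjoint qubits, so a single layer of $S$ gates suffices to push every misplaced qubit node one step toward the binary subtree; iterating costs depth at most the tree depth. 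To control the depth, I would first apply the balancing procedure of Theorem~\ref{thm:binary-shaped} in spirit—rotate to make the tree $O(\log N)$-deep—although here the obstruction is that tree rotations (\ref{lem:tree-ops}(b),(c)) are only defined within the binary subtree, so rotations cannot touch middle-child subtrees. The cleaner route is: (i) use $S$ gates to move every qubit node out of ``middle position'' into the binary subtree, which by the sibling structure takes depth $O(\log N)$ after noting each qubit node has $O(\log N)$ middle-edges above it only if the tree is already shallow; so in fact (ii) interleave with rotations. A robust way to bound everything: each round, first do one layer of $S$ gates fixing all current middle-child qubit nodes at even levels, then one balancing round of Theorem~\ref{thm:binary-shaped}'s rotations on the (now larger) binary subtree; $O(\log N)$ rounds of $O(\log N)$ depth each give $O(\log^2 N)$, after which the tree is binary shaped and $O(\log N)$-deep.

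The third step is routine: once both $\phi_1$ and $\phi_2$ have been brought (by depth-$O(\log^2 N)$ $\CNOT$/$S$ circuits) to binary-shaped trees with the same inorder leaf labeling, apply Theorem~\ref{thm:binary-shaped} to transform one into the other in depth $O(\log N)$, then uncompute. Composing, the total depth is $O(\log^2 N)$, and the full circuit consists only of $\CNOT$ and $S$ gates, as claimed. The main obstacle I anticipate is the second step: showing that the $S$-gate ``flips'' can be scheduled to run in parallel while provably terminating in $O(\log^2 N)$ total depth, because $S_p$ changes the tree globally below $p$ and one must verify that fixing a middle-child qubit node does not create new ones faster than they are removed—this is where the sibling/inorder structure from the first step does the real work, guaranteeing that the set of ``middle-positioned'' qubit nodes shrinks monotonically under a suitable parallel schedule.
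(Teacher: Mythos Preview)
Your structural lemma in step~1 overstates what product preservation buys you. You correctly observe that $-i\gamma_{2k-1}\gamma_{2k}$ must be a $Z$-string, which forces the two root-to-leaf paths to split at some common node into the $X$- and $Y$-branches (and to take only $Z$-branches below that). But this does \emph{not} determine which pair $(\gamma_{2k-1},\gamma_{2k})$ sits at which split node: any permutation of the pairs---i.e., any fermionic permutation of the modes---preserves the $Z$-string property. So product preservation does not force the Majorana labeling to be the inorder labeling, and your ``equivalently'' is false. (Nor does it force the pair to be sibling leaves; they can sit at the ends of $Z$-chains below the split.)

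This gap breaks step~3. Theorem~\ref{thm:binary-shaped} requires the two binary-shaped trees to share the same Majorana leaf ordering, which you have not established and which generically fails. The paper handles this differently: it first transforms the tree to the Jordan--Wigner \emph{shape} in depth $O(\log N)$---not $O(\log^2 N)$---and then invokes \cite[Lemma~5.9]{chiew2024} to conclude that any product-preserving JW-shaped tree mapping the vacuum to $\ket{0^n}$ equals the true Jordan--Wigner map up to a fermionic permutation, pair braiding, and Pauli sign changes. The last two cost $O(1)$; the fermionic permutation is absorbed by Theorem~\ref{thm:jw-circuit}, and \emph{that} is where the $O(\log^2 N)$ actually enters. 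Your $O(\log^2 N)$ budget in step~2 happens to match the final bound, but for the wrong reason; without the routing step the argument does not close.

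A smaller point on step~2: the paper's reduction to binary shape is much cleaner than your interleaved scheme and avoids the termination worry you flag. It applies Algorithm~\ref{alg:binary-shaped} \emph{in parallel} to the binary subtree of the root and to the binary subtree rooted at every middle-child qubit node; this turns each of them into a right spine, so afterwards every left child in the whole tree is a leaf. A single depth-$1$ layer of $S$ gates then swaps left and middle children wherever the middle child is a qubit node, producing a binary-shaped tree in one shot. One more application of Algorithm~\ref{alg:binary-shaped} yields the JW shape. The whole first stage is $O(\log N)$.
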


\begin{proof}
It suffices to give an algorithm transforming any product-preserving ternary tree mapping into the Jordan-Wigner map. At a high level, we proceed in two stages. In the first stage, we apply Algorithm~\ref{alg:ternary-jw} to map any ternary tree into the Jordan-Wigner shape. In the second stage, we use the theory of product-preserving ternary trees from Ref.~\cite{chiew2024} to argue that the resulting tree cannot be too far from the actual Jordan-Wigner tree.

To begin the implementation of Algorithm~\ref{alg:ternary-jw}, as illustrated in Fig.\ \ref{fig:ternary-jw}, we first apply Algorithm~\ref{alg:binary-shaped} to transform the binary subtree of the root, as well as the binary subtree of every middle child node, into one with no left children.
All of these subtrees are independent of each other, so all their transformations can be done in parallel in depth $O(\log N)$. 

Now we are left with a ternary tree where all left children are leaves. For every qubit node $k$ that has a middle child qubit node, apply the phase gate $S$ on qubit $k$. This swaps the left and middle children (Lemma~\ref{lem:tree-ops}) so that the left child has a qubit node and middle child a leaf. This makes the tree binary-shaped, so we can apply Algorithm~\ref{alg:binary-shaped} to convert it into the Jordan-Wigner tree shape, namely a single right spine. This completes the first stage.

For the second stage, we note that if $\ket{\psi}$ is a computational basis state, then so are $\CNOT_{jk} \ket{\psi}$ and $S_k \ket{\psi}$ for any $j$ and $k$. Since all gates used in Algorithm~\ref{alg:ternary-jw} maintain the product preservation property, our resulting tree is also product-preserving. In particular, the vacuum state $\ket{0, \dots, 0}_f$ maps to some computational basis state $\ket{q_1, \dots, q_n}$. We apply the Pauli gates $X_1^{q_1} \otimes \dots \otimes X_n^{q_n}$ so that the vacuum state now maps to the computational basis state $\ket{0^n}$. This has no effect on the tree shape, since conjugation of a Pauli string by a Pauli can only multiply the string by a phase factor.

We now have a ternary tree with the Jordan-Wigner shape that maps the vacuum state to $\ket{0^n}$. By \cite[Lemma 5.9]{chiew2024}, all ternary tree mappings with the Jordan-Wigner shape mapping the vacuum state to $\ket{0^n}$ must be equivalent to the exact Jordan-Wigner mapping up to a fermionic permutation, pair braiding (paired Majoranas $(\gamma_{2k-1}, \gamma_{2k})$ transforming to  $(\gamma_{2k}, -\gamma_{2k-1})$), and Pauli sign changes. The latter two may be fixed by constant-depth single-qubit gates, leaving the fermion permutation as the only degree of freedom.

This means that, given any two product-preserving ternary-tree mappings $\phi_1$ and $\phi_2$, there exist circuits ${C}_1$ and ${C}_2$ of depth $O(\log^2 N)$ such that ${C}_1 \phi_1 = \phi_{\JW} U_\sigma$ and ${C}_2 \phi_2 = \phi_{\JW} U_\tau$ for some permutations $\sigma, \tau \colon [N] \to [N]$. (Recall from Eq.~\eqref{eq:operator-transform} that $\phi_{\JW} U_f = \Phi_{\JW}(U_f) \phi_{\JW}$.) By Theorem~\ref{thm:jw-circuit}, there exists a circuit ${D}$ that implements $\Phi_{\JW}(U_{\tau\sigma^{-1}})$ in depth $O(\log^2 N)$. Then $C_2^{-1} D C_1$ is a circuit of depth $O(\log^2 N)$ transforming from $\phi_1$ to $\phi_2$.
\end{proof}

\begin{algorithm}[H]
    \caption{Circuit transforming any ternary-tree shape into a right spine (Jordan-Wigner shape)}
    \label{alg:ternary-jw}
    
    \begin{algorithmic}[1]
	\ForAll {$r \in \{\text{root node, every qubit node that is a middle child}\}$}
		\State Apply Algorithm~\ref{alg:binary-shaped} to the binary subtree rooted at $r$
	\EndFor
	\ForAll {$k \in \{\text{qubit nodes}\}$}
		\If {$k$'s middle child is not a leaf}
			\State Apply $S_k$
		\EndIf
	\EndFor
	\State Apply Algorithm~\ref{alg:binary-shaped} on the resulting binary-shaped tree
    \end{algorithmic}
\end{algorithm}

\begin{figure}[t]
\centering
\includegraphics{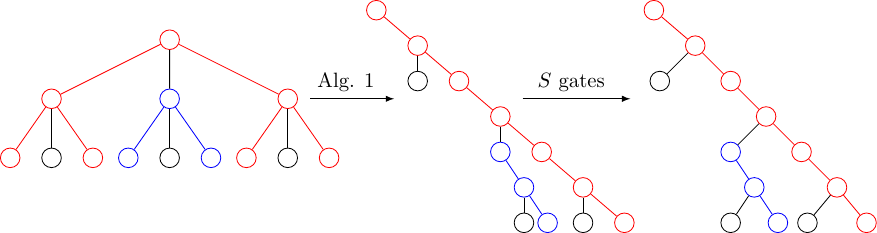}
\caption{In this example execution of Algorithm~\ref{alg:ternary-jw}, we first apply Algorithm~\ref{alg:binary-shaped} on the red binary subtree as well as the blue binary subtree. Then we apply $S$ gates on all middle nodes, depicted with a black edge, to transform them to left nodes. We would then apply Algorithm~\ref{alg:binary-shaped} again on the resulting binary tree. The qubit labelings and Majorana leaves are omitted.}
\label{fig:ternary-jw}
\end{figure}

Finally, we combine Theorems~\ref{thm:jw-circuit} and~\ref{thm:ternary-jw} to prove Theorem~\ref{thm:main}, which we now restate in technical language.

\newcounter{tempcounter}
\setcounter{tempcounter}{\value{theorem}}
\setcounter{theorem}{0}
\begin{theorem}
Given any product-preserving ternary tree mapping $\phi$ on $N$ fermions and any permutation $\sigma \colon [N] \to [N]$, there exists a quantum circuit that implements $\Phi(U_\sigma)$ in depth $O(\log^2 N)$.
\end{theorem}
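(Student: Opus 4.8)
The plan is to reduce to the Jordan-Wigner case already handled by Theorem~\ref{thm:jw-circuit}, using Theorem~\ref{thm:ternary-jw} as a conjugating "change of basis." Concretely, I would first invoke Theorem~\ref{thm:ternary-jw} to obtain a circuit $C$ of depth $O(\log^2 N)$, built from $\CNOT$ and $S$ gates, transforming the given product-preserving ternary tree mapping $\phi$ into the Jordan-Wigner mapping, up to the residual degrees of freedom identified in \cite[Lemma 5.9]{chiew2024}: a fermionic permutation, pair braiding of the paired Majoranas, and Pauli sign changes. The latter two are fixed by constant-depth single-qubit gates and may be folded into $C$ without changing its asymptotic depth, so that $C\phi = \phi_{\JW}\, U_\rho$ for some permutation $\rho \colon [N] \to [N]$.

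Next I would push the desired permutation $\sigma$ through this identification. Writing $\phi = C^{-1}\phi_{\JW} U_\rho$ and using Eq.~\eqref{eq:operator-transform}, one computes
\begin{equation}
\Phi(U_\sigma) = \phi\, U_\sigma\, \phi^{-1} = C^{-1}\phi_{\JW}\, U_{\rho}\, U_\sigma\, U_{\rho}^{-1}\, \phi_{\JW}^{-1}\, C = C^{-1}\, \Phi_{\JW}\!\parens{U_{\rho\sigma\rho^{-1}}}\, C,
\end{equation}
where I have used that $U_\rho U_\sigma U_\rho^{-1} = U_{\rho\sigma\rho^{-1}}$ (immediate from Definition~\ref{def:fermion-permutation}, since conjugating $a_k \mapsto a_{\rho\sigma\rho^{-1}(k)}$ and both sides annihilate the vacuum). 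By Theorem~\ref{thm:jw-circuit} there is a circuit $D$ of depth $O(\log^2 N)$ implementing $\Phi_{\JW}(U_{\rho\sigma\rho^{-1}})$. The desired circuit is then $C^{-1} D\, C$, whose depth is $O(\log^2 N) + O(\log^2 N) + O(\log^2 N) = O(\log^2 N)$.

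The only subtlety — and the closest thing to an obstacle — is bookkeeping the residual degrees of freedom from Theorem~\ref{thm:ternary-jw} correctly: pair braiding and Pauli sign changes act as single-qubit Cliffords and must be absorbed into $C$, and one must confirm that the leftover fermionic permutation $\rho$ is a genuine fermion permutation operator (so that its conjugate of $U_\sigma$ is again a fermion permutation, and hence within the scope of Theorem~\ref{thm:jw-circuit}). Once this is checked, the argument is a short composition of the two main theorems. I would close by noting that the Jordan-Wigner case (Theorem~\ref{thm:jw-circuit}) is the special case $\phi = \phi_{\JW}$, $\rho = \mathrm{id}$, and that the corresponding corollary follows identically by substituting the $O(\log N)$ ancilla-assisted routing circuit of \cite{maskara2025} for $D$.
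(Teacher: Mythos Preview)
Your proposal is correct and follows essentially the same route as the paper: conjugate by the $O(\log^2 N)$-depth change-of-encoding circuit from Theorem~\ref{thm:ternary-jw}, apply the Jordan-Wigner routing circuit from Theorem~\ref{thm:jw-circuit}, and conjugate back. The only difference is bookkeeping: Theorem~\ref{thm:ternary-jw} as stated already absorbs the residual fermionic permutation internally (its proof uses Theorem~\ref{thm:jw-circuit} to cancel the leftover $U_\rho$), so the paper can write $C\phi = \phi_{\JW}$ directly and implement $\Phi_{\JW}(U_\sigma)$ rather than $\Phi_{\JW}(U_{\rho\sigma\rho^{-1}})$; your explicit conjugation by $U_\rho$ is correct but redundant once Theorem~\ref{thm:ternary-jw} is invoked in full.
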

\setcounter{theorem}{\value{tempcounter}}

\begin{proof}
By Theorem~\ref{thm:ternary-jw}, there exists a circuit $C$ of depth $O(\log^2 N)$ such that $\mathcal{C}\phi = \phi_{\JW}$. By Theorem~\ref{thm:jw-circuit}, there exists a circuit $D$ that implements $\Phi_{\JW}(U_{\sigma})$ in depth $O(\log^2 N)$. Then $C^{-1}DC$ implements $\Phi(U_\sigma)$ in depth $O(\log^2 N)$, as desired.
\end{proof}

\section{Applications}\label{sec:applications}

In this section, we discuss applications of our methods to time evolution under widely studied fermionic models, see also Ref.~\cite{maskara2025} for applications of log-depth routing. We summarize our discussions in Table~\ref{tab:comparison}, where we compare to the best previous methods that we are aware of.

Simulating fermionic Hamiltonians is one of the most promising applications of quantum computing, enabling the solution of classically challenging problems in quantum chemistry~\cite{McArdle2020}, materials science~\cite{babbush_low-depth_2018}, and high-energy physics~\cite{bauer_quantum_2023}. However, in order to simulate fermions with qubit quantum computers, the fermionic operators must be mapped to 
qubit operators. This leads to a multiplicative overhead in the circuit depth and gate count. %
In other words, this overhead is defined as the depth or gate count on a conventional qubit quantum computer (see e.g.~Refs.~\cite{stanisic2022observing,hemery_measuring_2024,nigmatullin2025experimental,evered_probing_2025} for recent experimental demonstrations) divided by the depth or gate count on a fermionic quantum computer that digitally manipulates fermionic particles directly (see e.g.~Refs.~\cite{yan2022twodimensional,gonzalez-cuadra2023fermionic,schuckert_fermion-qubit_2024,ott_error-corrected_2024}). 

The simplest mapping of fermions to qubits is the Jordan-Wigner encoding \cite{jordanwigner}. One way to understand this mapping is that it maps products of fermionic operators to products of qubit operators. Naively, this encoding has worst-case depth overhead linear in $N$ for $N$ fermionic sites because some fermionic Hamiltonian terms are mapped to Pauli strings with support on $\Theta(N)$ sites, even if in the fermionic terms only have support on $\Theta(1)$ sites. This is because the Jordan-Wigner encoding labels the fermionic modes along a fictitious one-dimensional chain. For example, the hopping term from one end of the chain to the other has weight $2$ in the fermionic formulation and weight $N$ after mapping to qubits. This is a highly geometrically non-local hopping term, but because of the effective one-dimensionality, even geometrically local hopping terms in low dimension can have high weight if they are not along the direction of the Jordan-Wigner chain. For example, the hopping term in two dimensions which is not along the chain has weight $O(\sqrt{N})$~\cite{hemery_measuring_2024}. While it is widely known that this leads to an at-worst linear depth overhead for fermion time evolution, the reason for this is often miscommunicated: it is not the high weight of the fermion terms which leads to the depth overhead, but the restriction in parallelization. The most straightforward approach for implementing time evolution under a Pauli-string Hamiltonian is done with a so-called Pauli gadget, i.e., the conjugation of a single-qubit rotation with an $N$-qubit CNOT ladder, which requires depth $N$. However, this CNOT ladder can actually be compressed to depth $O(\log N)$ without ancillas~\cite{Remaud_2025}. Therefore, a single fermionic Hamiltonian term can be implemented in depth $O(\log N)$ within the Jordan-Wigner encoding, and therefore the depth overhead resulting from the $O(N)$ operator weight is only $O(\log N)$. Instead, the reason why the Jordan-Wigner encoding has $O(N)$ 
depth overhead for the overall time-evolution task is the parallelization restriction: in the worst case, only one Hamiltonian term can be applied at a time, resulting in a depth scaling with the number of Hamiltonian terms, which is $\Omega(N)$ for a non-trivial Hamiltonian. 
By contrast, for fermionic quantum computers, there is no such restriction other than the connectivity graph of the Hamiltonian---in most physical Hamiltonians, the connectivity is such that $\Theta(N)$ terms can be implemented in parallel (this is the case in the 2D Fermi-Hubbard model~\cite{campbell2021early}), 
such that for example $\Theta(N)$ Hamiltonian terms can be implemented with depth $\Theta(1)$. 

The Bravyi-Kitaev encoding~\cite{bravyi_kitaev_2002} might at first seem to address this issue~\cite{Seeley_2012}, as it reduces the operator weight after the fermion-to-qubit mapping to only $O(\log N)$ sites. However, unfortunately, these savings for a single fermionic Hamiltonian term do not carry over to the task of implementing all terms: the Pauli strings to which products of fermion creation and annihilation operators are mapped all have one qubit in common, so they cannot be implemented in parallel \cite{Bringewatt2023}. Therefore, in practice, even though the weight reduction can lead to a gate count improvement, for most problems the Bravyi-Kitaev encoding does not show an improvement over the $O(N)$ 
depth overhead of the Jordan-Wigner encoding. Schemes have been proposed to avoid this restriction by using ancillas~\cite{bravyi_kitaev_2002,chien2020,Bringewatt2023,luo2025, schuckert_fermion-qubit_2024}, which in particular can achieve $O(1)$ overhead for geometrically local models~\cite{Verstraete2009,derby2021}. However, all of these approaches require $\Omega(N)$ ancillas, which might be a prohibitive overhead, especially for fault-tolerant architectures, where qubit count is often more expensive than depth due to the exponential error suppression~\cite{gidney2025how}. 

Instead, we here rely on the polylog routing schemes from Ref.~\cite{maskara2025} and this work to reduce the depth of fermion time evolution.

\subsection*{Reducing the fermion-to-qubit overhead in fermion time evolution to permutations}

Our aim is to implement a product-formula approximation of the time evolution under a fermionic Hamiltonian $ H$ with minimal circuit depth. For concreteness and simplicity, consider the Hamiltonian $ H=\sum_{i,j=1}^N J_{ij} (a^\dagger_i a_j + \hc)$. We consider a single-step first-order scheme that approximates the time-evolution operator $ U=\exp(-i H t)$ as
\begin{equation}
     U = \prod_{i,j=1}^N \exp(-i t J_{ij} (a^\dagger_i a_j + \hc)) + O(t^2).
    \label{eq:alltoallhopping}
\end{equation}
As we will discuss near the end of this section, both of these restrictions are not necessary: this scheme generalizes to any Trotter scheme and any Hamiltonian (without any restrictions on $k$-locality).

On a fermionic quantum computer, which in particular can directly implement $\exp(-i t J_{ij} (a^\dagger_i a_j + \hc))$  as a single gate, the minimal depth to implement this scheme depends only on how many of the $J_{ij}$ are nonzero and to what extent the structure of the $J_{ij}$ may be parallelized. For instance, for a full-rank tensor, i.e., with $\Theta(N^2)$ non-zero terms, the terms can be grouped such that $O(N)$ of them can be implemented in parallel. This means that the minimal depth is $\Omega(N)$. For a non-full-rank tensor, the situation is more complicated: if there are only $\Theta(N)$ non-zero terms, this may still require depth $\Theta(N)$ if all of the terms have one fermion in common. %
In order to discuss the overhead introduced by the fermion-to-qubit encoding independently of the parallelization restrictions introduced by the Hamiltonian parameters, we define the fermion-to-qubit overhead as the %
multiplicative overhead in gate depth between an implementation on a fermionic quantum computer and a qubit quantum computer, or in other words, the gate depth on a qubit quantum computer divided by the gate depth on a fermion quantum computer, as we have already defined above. To make this comparison, we assume all-to-all connectivity in both qubit and fermion quantum computers.

One of the most efficient ways to perform time evolution in the Jordan-Wigner encoding are the fSWAP networks introduced in Ref.~\cite{kivlichan_quantum_2018}. They have been shown to achieve $O(N)$ depth  (and more generally, $O(N^{k-1}$) for k-local Hamiltonians Ref.~\cite{ogorman_generalized_2019}) for the simulation task in Eq.~\eqref{eq:alltoallhopping} by applying transpositions of neighboring fermions along the Jordan-Wigner chain between fermion gate layers, enabling every possible ordering of the fermions to be traversed with only constant depth overhead. Therefore, fSWAP networks are optimal for all-to-all connected Hamiltonians. However, there is a large class of Hamiltonians with $\Theta(N)$ non-zero $J_{ij}$ that can be implemented in depth $O(1)$ on fermionic quantum computers, but whose implementation with fSWAP networks on qubit quantum computers still uses depth scaling as some power of the system size. One example is the nearest-neighbor Fermi-Hubbard model in two spatial dimensions, for which current fSWAP implementations use $\Theta(\sqrt{N})$ depth~\cite{hemery_measuring_2024}.

In this section, we discuss that, using the schemes introduced in Ref.~\cite{maskara2025} and this work, fSWAP networks can in fact achieve a worst-case upper bound of $O(\log^2 N)$ depth overhead compared to fermionic quantum computers for \emph{general} fermionic Hamiltonians without introducing ancilla qubits. We do so by simply routing the fermion along the Jordan-Wigner chain in-between layers of time evolutions that only act on neighboring fermionic sites, in complete analogy to how routing helps in reducing the depth of all-to-all-connected circuits in 1D nearest-neighbour-connected hardware~\cite{constantinides_optimal_2024}.  

Specifically, for our example, to order $t^2$, 
\begin{align}
 U &\approx \prod_{l=1}^{N_\mathrm{layers}}  U_{\sigma_l}^{-1}  U_{\sigma_l} \prod_{J_{ij}\in \mathrm{layer}\,l} \exp(-i t J_{ij} (a^\dagger_i a_j + \hc))   U_{\sigma_l}^{-1}  U_{\sigma_l}\nonumber \\
&= \prod_{l=1}^{N_\mathrm{layers}}   U_{\sigma_l}^{-1} \prod_{J_{ij}\in \mathrm{layer}\,l} \exp(-i t J_{ij} (a^\dagger_{\sigma_l(i)} a_{\sigma_l(j)} + \hc))  U_{\sigma_l}.\label{eq:UsigmaU}
\end{align}
(We define the product of operators using the following order: $\prod_{i = 1}^N  O_i =  O_N \cdots  O_2  O_1$.) This scheme, see also Ref.~\cite{maskara2025}, is a generalization of the time evolution scheme in the original work introducing fSWAP networks~\cite{kivlichan_quantum_2018} and is also closely related to the double factorization scheme~\cite{motta2021low}, where fermion-overhead-free density-density interactions are alternated with fermionic basis transformations. One way to understand the above scheme is that part of the basis transformation in double factorization is absorbed in the Hamiltonian (such that now it is not purely density-density any more), and the rest of the basis transformation is restricted to permutations of fermions. 

The terms are grouped into $N_\mathrm{layers}$ layers, so that terms within a given layer can be applied in parallel on a fermionic quantum computer. The goal of the permutations is to rearrange the order of the fermion sites in a given layer so that the Hamiltonian terms only act on sites that are adjacent in the Jordan-Wigner encoding, such that $k$-local fermion terms map to $k$-local qubit terms.  In the first line of Eq.~\eqref{eq:UsigmaU}, we insert resolutions of identity in terms of  permutation unitaries $ U_{\sigma_l}$, which apply permutation $\sigma_l$ corresponding to layer $l$: $ U_{\sigma_l} a_i  U_{\sigma_l}^{-1}  = a_{\sigma_l(i)}$.  To enable parallelization of all terms within a given layer in the Jordan-Wigner encoding, the permutation unitaries $ U_{\sigma_l}$ reorder the fermions along the Jordan-Wigner chain such that, for all $J_{ij}$ within a given layer, the fermions $\sigma_l(i)$ and $\sigma_l(j)$ are neighboring along the Jordan-Wigner chain. 
This means that $\exp(-i t J_{ij} (a^\dagger_{\sigma_l(i)} a_{\sigma_l(j)} + \hc)) \rightarrow \exp(-i t J_{ij} ( \Sigma^+_{\sigma_l(i)}  \Sigma^-_{\sigma_l(j)} + \hc))$, where $\Sigma^\pm=( X\pm i Y)/2$ are qubit Pauli operators, can be implemented in depth $O(1)$ and no parallelization restriction arises from the fermion-to-qubit mapping. Furthermore, for every $l < N_\mathrm{layers}$, $ U_{\sigma_{l+1}}  U_{\sigma_l}^{-1}$ can be combined into a single permutation unitary. 

Our discussion above straightforwardly generalizes to any fermionic Hamiltonian (without any restrictions on spacial locality or $k$-locality but with terms that contain an even number of fermionic creation and annihilation operators) as we can in any case, in particular also for an arbitrary order product formula, write time evolution as alternations of fermion permutations and time evolutions in such a way that the time evolution only acts on Jordan-Wigner adjacent modes. For example, for a 4-local term $\propto a^\dagger_i a^\dagger_j a_k a_m$ as appearing in quantum chemistry, the permutations simply exchange four fermion operators such that after the permutations, they are adjacent along the Jordan-Wigner chain and we can map $a^\dagger_{\sigma_l(i)} a^\dagger_{\sigma_l(j)} a_{\sigma_l(k)} a_{\sigma_l(m)}\rightarrow  \Sigma^+_{\sigma_l(i)}  \Sigma^+_{\sigma_l(j)}  \Sigma^-_{\sigma_l(k)}  \Sigma^-_{\sigma_l(m)}$. The resulting at most $N/4$ terms can then be applied without any parallelization restrictions.

As an example, we consider simulation of fermionic Hamiltonians of the form
\begin{equation}
    H= \sum_{i<j} J_{ij} (a^\dagger_i a_j + \hc) + \sum_{i<j} U_{ij} n_i n_j,
\end{equation}
where $n_i=a^\dagger_i a_i$, and $J_{ij},U_{ij}$ are the hopping and interaction coefficient matrices, respectively. This problem is relevant to simulations of materials~\cite{babbush_low-depth_2018}.

\paragraph*{Fermionic fast Fourier transform.} The fermionic fast Fourier transform~\cite{verstraete2009quantum} diagonalizes the hopping Hamiltonian provided $J_{ij}$ is translationally invariant on a $d$-dimensional hypercubic lattice with periodic boundary conditions (we focus on the 1-dimensional case here since the $d$-dimensional FFFT is simply the one-dimensional FFFT applied in all dimensions~\cite{babbush_low-depth_2018}, just as for the usual FFT). Concretely,
\begin{equation}
    \sum_{i<j} J_{ij} (a_i^\dagger a_j + \hc)=\mathrm{FFFT} \left(\sum_i \theta_i n_i \right)\mathrm{FFFT}^\dagger,
\end{equation}
where the $\theta_i$ can be computed classically. This means that the depth of evolving under $\sum_{i<j} J_{ij} (a_i^\dagger a_j + \hc)$ is twice the depth of the FFFT plus the depth of evolving under $\sum_i \theta_i n_i$, which is $O(1)$ in the Jordan-Wigner encoding. 

In the following, we discuss that the FFFT can be applied in polylog depth, implying that a single Trotter step under the hopping Hamiltonian for translationally-invariant systems can be done in polylog depth, as shown in Ref.~\cite{maskara2025}, see also Ref.~\cite{verstraete2009quantum,schuckert_fermion-qubit_2024} for the corresponding statement for fermionic quantum computers.

\begin{corollary}[Ancilla-free polylog-depth FFFT]
    The fermionic fast Fourier transform \cite{Verstraete2009} on $N$ fermionic modes can be implemented in depth $O(\log^2 N)$ in the Jordan-Wigner encoding. \label{thm:polylogFFFT}
\end{corollary}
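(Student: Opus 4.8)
The plan is to take the standard radix-2 decomposition of the fermionic fast Fourier transform and route it along the Jordan-Wigner chain, replacing the ancilla-assisted routing steps of Ref.~\cite{maskara2025} by the ancilla-free circuits developed above. Concretely, I would use the constant-geometry (Cooley--Tukey/Pease) form of the FFT: on $N = 2^n$ modes the FFFT factors into $n = \log_2 N$ identical-geometry stages, where each stage is a layer of two-mode fermionic Givens rotations (dressed by single-mode twiddle-factor phases $e^{i\theta n_k}$, with the $\theta$ computed classically) acting on the disjoint \emph{adjacent} pairs $(2j,2j+1)$ along the Jordan-Wigner chain, followed by a fixed perfect-shuffle (interleave/riffle-shuffle) permutation of all $N$ modes; a single bit-reversal permutation at the end (equivalently, a bit-reversed relabeling of the input modes) completes the transform. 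This is essentially the decomposition used in Ref.~\cite{maskara2025}; the remaining task is only to bound the depth of each ingredient without ancillas.

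Next I would account for the cost of each piece. For two modes adjacent along the Jordan-Wigner chain the leading $Z$-strings cancel, so $\Phi_{\JW}(a_i^\dagger a_{i+1})$ is a weight-two Pauli operator on qubits $i$ and $i+1$; hence each two-mode Givens rotation is a two-qubit gate, each twiddle phase is a single-qubit $Z$-rotation, and a full stage of such gates---acting on disjoint adjacent pairs---has depth $O(1)$. Each perfect-shuffle step is a single fermion permutation of the interleave type treated in Ref.~\cite{maskara2025}; by the observation recorded in the footnote in Section~\ref{sec:intro} (implement Maskara et al.'s interleave circuit but compress its $\CNOT$ ladders using Ref.~\cite{Remaud_2025}) it can be carried out in depth $O(\log N)$ without ancillas, measurements, or feedforward. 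The final bit-reversal is a single arbitrary permutation, which Theorem~\ref{thm:jw-circuit} implements in depth $O(\log^2 N)$, and it is incurred only once.

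Summing, the $n = O(\log N)$ stages contribute $O(\log N) \cdot (O(1) + O(\log N)) = O(\log^2 N)$, and the single bit-reversal adds a further $O(\log^2 N)$, for $O(\log^2 N)$ overall with no ancillas, measurements, or feedforward, since every ingredient is ancilla-free. I expect the point requiring the most care to be the first step: one must use the \emph{constant-geometry} FFT so that the permutation between consecutive stages is a single interleave, implementable in $O(\log N)$ depth---if one instead routed an arbitrary permutation between stages (as in the textbook in-place algorithm, where stage $k$ couples modes at Jordan-Wigner distance $2^k$) the per-stage cost would be $O(\log^2 N)$ and the bound would degrade to $O(\log^3 N)$. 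A secondary check is that the fermionic signs produced by reordering the modes are exactly those built into the definition of the FFFT; this is automatic because all routing is performed with $\fSWAP$-based fermion permutations as in Definition~\ref{def:fermion-permutation} and Theorem~\ref{thm:jw-circuit}.
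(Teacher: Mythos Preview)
Your proposal is correct and reaches the same $O(\log^2 N)$ bound, but it routes the FFFT differently from the paper. The paper inspects the Verstraete--Cirac--Latorre circuit directly and observes that its $O(\log N)$ $\fSWAP$ layers are (with one exception) \emph{staircase} permutations; it then invokes the paper's own $O(\log N)$ staircase implementation from the proof of Theorem~\ref{thm:jw-circuit}, handling the one exceptional layer as a general permutation in $O(\log^2 N)$. You instead recast the FFFT in constant-geometry (Pease) form so that the inter-stage permutation is a perfect shuffle, i.e., an \emph{interleave}, and invoke the footnote's observation (attributed to Ref.~\cite{maskaraprivate} via Ref.~\cite{Remaud_2025}) that interleaves can be done ancilla-free in $O(\log N)$. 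Both arguments are sound; the paper's version is more self-contained since it uses only the staircase machinery developed in full here, whereas your version leans on a result the paper records but does not prove. Conversely, your write-up makes the constant-geometry structure explicit and flags clearly why a naive in-place layout would cost an extra $\log N$ factor, which is a useful caveat the paper leaves implicit.
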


This means the depth overhead is only $O(\log N)$ as compared to a fermionic quantum computer~\cite{Verstraete2009,schuckert_fermion-qubit_2024}.

\begin{proof}
    Inspecting the circuit for implementing the FFFT (see Fig.~1 in Ref.~\cite{Verstraete2009}) reveals that it consists of $O(\log N)$ $\fSWAP$ layers and $O(\log N)$ layers of parallel interactions. Each interaction layer can be performed in depth $O(1)$ in the Jordan-Wigner encoding because each interaction involves only two modes adjacent in the Jordan-Wigner ordering. In addition, inspecting each $\fSWAP$ layer reveals that each is a staircase permutation, with the exception of the last layer. Because staircase permutations can be performed in depth $O(\log N)$ and general permutations in depth $O(\log^2 N)$ (per  Theorem~\ref{thm:jw-circuit}), the FFFT can be performed in depth $O(\log^2 N)$. 
\end{proof}

Finally, while the focus of this work is on ancilla-free circuits, we can remove one factor of $\log N$ if we allow for ancillas and mid-circuit measurements:

\begin{corollary}[Log-depth FFFT with ancillas~\cite{maskara2025}]
Given any permutation $\sigma\colon [N] \to [N]$, there exists a circuit of depth $O(\log N)$ with $O(N)$ ancillas and measurement and feedforward implementing $\Phi_{\JW}(U_\sigma)$. In addition, there is a circuit of depth $O(\log N)$ implementing the FFFT using $O(N)$ ancillas and measurement and feedforward.
\end{corollary}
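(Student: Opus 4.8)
The plan is to prove the two claims in turn, in each case compressing the building blocks of Theorem~\ref{thm:jw-circuit} and Corollary~\ref{thm:polylogFFFT} down to constant depth with the help of a cat-state register. The first claim is, up to the choice of distinguished permutations, the main result of Ref.~\cite{maskara2025}; I would nevertheless re-derive it from the staircase decomposition, since the same compressed primitive then plugs directly into the FFFT argument.

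For an arbitrary permutation $\sigma$, I would first use Lemma~\ref{lem:staircase} to write it as $\ceil{\log_2 N}$ layers of disjoint staircase permutations, and promote each transposition to an $\fSWAP$ in the Jordan-Wigner encoding exactly as in the proof of Theorem~\ref{thm:jw-circuit}. It then suffices to implement one fermionic staircase permutation in depth $\bigo{1}$ using $\bigo{N}$ ancillas. Running the surgery of Theorem~\ref{thm:jw-circuit}, such a staircase reduces to a constant number of contiguous-range $\CZ$-fanout circuits together with a single layer of disjoint qubit $\SWAP$s, the latter of depth $\bigo{1}$ in the all-to-all model. By Lemma~\ref{lem:contiguous-range-cz} each contiguous-range $\CZ$-fanout circuit is a parity transform $P$, then a constant number of layers of \emph{disjoint} $\CZ$-fanouts, then $P^{-1}$, and each of these pieces is a fanout-type operation ($\CZ$-fanout, $\MOD_2$, or parity); with $\bigo{N}$ ancillas, measurement, and feedforward each is implementable in depth $\bigo{1}$ by preparing a GHZ/cat register, broadcasting the control (or accumulating the parity) in one layer, applying all target gates in parallel, and disentangling the register by an $X$-basis measurement followed by a classically-conditioned Pauli correction~\cite{moore1999fanoutparity,maskara2025}. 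Reusing the ancillas across the $\bigo{\log N}$ staircase layers yields a depth-$\bigo{\log N}$ circuit on $\bigo{N}$ ancillas.

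For the FFFT I would start from the decomposition already used to prove Corollary~\ref{thm:polylogFFFT}: the circuit of Fig.~1 of Ref.~\cite{Verstraete2009} consists of $\bigo{\log N}$ layers of parallel nearest-neighbor interactions, each of depth $\bigo{1}$ in Jordan-Wigner since it couples only Jordan-Wigner-adjacent modes, together with $\bigo{\log N}$ $\fSWAP$ layers, every one of them a staircase permutation except for the single final layer. The key point is that each staircase $\fSWAP$ layer now costs depth $\bigo{1}$ by the compression above rather than $\bigo{\log N}$, while the lone non-staircase layer, being a permutation, costs depth $\bigo{\log N}$ by the first claim; summing over all layers gives an $\bigo{\log N}$-depth FFFT with $\bigo{N}$ ancillas, reused throughout.

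I expect the only genuine obstacle to be bookkeeping the measurement-and-feedforward corrections in the ancilla-assisted fanout/parity primitive: one must verify that the byproduct Paulis generated when disentangling the cat registers either cancel or commute through the remaining (Clifford) $\CZ$ and $\SWAP$ structure, so that they can be absorbed into a single, classically determined final Pauli frame, and that the ancilla count stays $\bigo{N}$ when registers are recycled between layers. This is precisely the ingredient imported from Ref.~\cite{maskara2025}; the rest is routine once the surgery of Theorem~\ref{thm:jw-circuit} and Corollary~\ref{thm:polylogFFFT} is in hand.
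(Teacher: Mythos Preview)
Your proposal is correct and follows essentially the same route as the paper's own proof: decompose $\sigma$ into $\bigo{\log N}$ staircase layers via Lemma~\ref{lem:staircase}, observe that each staircase layer (after the surgery of Theorem~\ref{thm:jw-circuit}) reduces to a constant number of contiguous-range $\CZ$-fanout circuits, and then compress the parity transforms and $\CZ$-fanouts appearing in Lemma~\ref{lem:contiguous-range-cz} to $\bigo{1}$ depth using $\bigo{N}$ ancillas with measurement and feedforward; the FFFT argument likewise mirrors Corollary~\ref{thm:polylogFFFT} with the staircase $\fSWAP$ layers now costing $\bigo{1}$ each. Your additional remarks on Pauli-frame bookkeeping and ancilla recycling are sensible refinements that the paper leaves implicit in its citations.
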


\begin{proof}
    The final circuit for performing a fermionic staircase permutation in Figs.~\ref{fig:upgates} and \ref{fig:circuit2_simplified} is composed of three contiguous-range $\CZ$ fanouts, each consisting of two queries to the parity transform and several $\CZ$-fanout operators on independent qubits. $\CZ$-fanout and $P$ (which is equivalent to a $\CNOT$ ladder) can both be implemented in constant depth \cite{B_umer_2025, Jones_2012} using a constant number of rounds of measurements and $O(N)$ ancillas, reducing each of the logarithmically many layers of staircase permutations to constant depth. In addition, as mentioned in Corollary~\ref{thm:polylogFFFT}, each permutation step in the FFFT implementation is a staircase permutation, with the exception of the last, showing that the depth of the entire FFFT circuit is reduced to $O(\log N)$ with $O(N)$ ancillas and measurements.
\end{proof}

Next, we use this efficient FFFT to reduce the circuit depth in important models.

\paragraph*{Fermi-Hubbard-like models.} We first discuss the two-dimensional Fermi-Hubbard model as an example for which our encoding yields a depth advantage when disallowing ancillas. In this scenario, the $J_{ij}$ coefficients are only non-zero for finite-range hopping on a two-dimensional lattice. Specifically, $J_{ij}$ is non-zero only if site $j$ is within a finite number of cartesian units from site $i$ on the 2D lattice.  We also make the same assumption about the $U_{ij}$. In this case, one way to implement Trotter time evolution is to first apply the hopping terms, and then the interactions (or vice versa). Each of those two components is divided into $O(1)$ groups of hopping/interaction  terms that only act on two sites such that all $O(N)$ terms in that group can be applied in parallel. For example, for nearest-neighbor hopping in 2D, there are four groups: two each for the hopping along the two directions of the lattice. This means that, on a fermionic quantum computer, a single Trotter step can be applied in $O(1)$ depth. Compact encodings~\cite{verstraete2005mapping, derby2021} achieve the same scaling by introducing $O(N)$ ancillas and preparing a toric-code state in the beginning of the calculation, which can be done in depth $O(1)$ using midcircuit measurement and feedforward. Therefore, compact encodings achieve $O(1)$ depth overhead.

Our scheme does not achieve $O(1)$ depth, even with $O(N)$ ancillas, and is therefore asymptotically inferior when $O(N)$ ancillas are allowed. However, the lowest-depth ancilla-free scheme known so far uses fSWAP networks based on local transpositions, achieving $O(\sqrt{N})$ depth~\cite{kivlichan_quantum_2018,hemery_measuring_2024}. We can improve on this in cases in which the Hamiltonian is translationally invariant: in that case, we can employ the FFFT to implement time evolution under the hopping part, leading to $O(\log^2 N)$ depth. Because the interaction term can be applied in depth $O(1)$, this implies overall depth $O(\log^2 N)$ for a single Trotter step.

\paragraph*{Materials simulation.} The materials Hamiltonian---or equivalently, quantum chemistry in the plane-wave dual basis~\cite{babbush_low-depth_2018}---is essentially a more complicated version of the Fermi-Hubbard model: the coefficients $J_{ij}$ and $U_{ij}$ are full rank, but both are translationally invariant~\cite{babbush_low-depth_2018}. Therefore, the hopping part can be compressed again to $O(\log^2 N)$ depth without ancillas and $O(\log N)$ depth with ancillas by using the FFFT and polylog-depth permutations. By contrast, because of the unboundedness of the hoppings, no scheme is known how to achieve this depth in the compact encoding. Instead, the depth in the compact encoding defaults to the same asymptotic depth as using fSWAP networks with transpositions~\cite{kivlichan_quantum_2018} as the same geometric restrictions apply. A scheme to reduce the depth to $O(\log N)$ was proposed in Ref.~\cite{schuckert_fermion-qubit_2024}, but required $O(N^2)$ ancillas. For the interaction term, no scheme is known to apply it in polylog depth without ancillas; however, an $O(\log N)$-depth scheme is known using $O(N \log N)=\tilde O(N)$ ancillas~\cite{low_hamiltonian_2019}. Therefore, the schemes proposed in Ref.~\cite{maskara2025} and in this work achieve polylog depth for a single Trotter step, as already noted in ~\cite{maskara2025}. Finally, for obtaining the end-to-end depth for simulating time evolution until time $T$, we need to multiply this by the number of Trotter steps. The best known upper bound is $O(NT)$ from Ref.~\cite{childs2021theory}, assumes arbitrarily large Trotter order and fixed total simulation error. Therefore, our schemes achieve $\tilde O(N T)$ depth complexity with $\tilde O(N)$ qubits. This quasilinear-in-number-of-plane-waves scaling of qubits and depth has to our knowledge not been achieved for qubit quantum computers yet and this scaling is on par (asymptotically) with fermionic quantum computers~\cite{schuckert_fermion-qubit_2024}.

\begin{table*}[t]
\renewcommand{\arraystretch}{1.2}
\begin{tabular}{|l|cc|cc|cc|cc|cc|}
\hline
\multirow{2}{*}{\textbf{ (Sub-)routine}} 
 & \multicolumn{2}{c|}{\textbf{Compact}~\cite{verstraete2005mapping, derby2021}} 
 & \multicolumn{2}{c|}{\textbf{fSWAP}~\cite{kivlichan_quantum_2018}} 
 & \multicolumn{2}{c|}{\textbf{pfSWAP}~\cite{schuckert2025faulttolerant}} 
 & \multicolumn{2}{c|}{\textbf{Maskara et al.}~\cite{maskara2025}} 
 & \multicolumn{2}{c|}{ \textbf{This work}}\\
 & Depth & Anc. & Depth & Anc. & Depth & Anc. & Depth & Anc. & Depth & Anc. \\
\hline
Fermion permut.
 & $N$ & $N$
 & $N$ & $0$
 & $1$ & $N^{2}$
 & $\log N$ & $N$ 
 & $\log^{2} N$ & $0$\\

FFFT
 & $N\log N$ & $N$
 & $N\log N$ & $0$
 & $\log N$ & $N^{2}$
 & $\log N$ & $N$ 
 & $\log^{2} N$ & $0$\\
\hline
Materials (1 step)
 & $N$& $N$
 & $N$ & $0$
 & $\log N$ & $N^{2}$
 & $\log N$ & $N\log N$ 
 & $\log^{2} N$ & $N\log N$\\

Materials (full)
 & $N^2$ & $N$
 & $N^{2}$ & $0$
 & $N\log N$ & $N^{2}$
 & $N\log N$ & $N\log N$ 
 & $N\log^{2} N$ & $N\log N$\\

\shortstack{2D Fermi-Hubbard}
 & $1$ & $N$
 & $\sqrt{N}$ & $0$
 & $\log N$ & $N^{2}$
 & $\log N$ & $N$ 
 & $\log^{2} N$ & $0$\\
\hline
\end{tabular}
\caption{Asymptotic circuit depth and ancilla count for key fermionic subroutines and applications. All quantities are $O(\cdot)$ upper bounds.
Compact refers to any compact encoding, e.g., Verstraete-Cirac~\cite{verstraete2005mapping} or Derby-Klassen~\cite{derby2021}; fSWAP is the transposition-based fSWAP network from Ref.~\cite{kivlichan_quantum_2018}; 
pfSWAP refers to parallel fSWAP from Ref.~\cite{schuckert_fermion-qubit_2024} using $O(N^2)$ ancillas to brute-force parallelize the CZ circuits in the fSWAP networks; ``2D Fermi-Hubbard'' applies, more generally, to any model with $O(1)$-range hopping and interactions and assumes periodic boundary conditions for pfSWAP and our work (as we assume the usage of the FFFT for the kinetic term), and the gate counts are for a single Trotter step. The number of Trotter steps $O(N)$ used in ``Materials (full)'' assumes arbitrarily large Trotter order, fixed evolution time, and fixed simulation error. Maskara et al.~\cite{maskara2025} makes use of ancillas in the permutation step, whereas our work only uses ancillas for parallelizing the interaction part of the Hamiltonian~\cite{low_hamiltonian_2019}.}
\label{tab:comparison}
\end{table*}

\section{Discussion and Outlook \label{sec:outlook}}
In this work, we build on the recent exponential improvement in fermionic routing introduced by Maskara et al.~\cite{maskara2025}, who showed that the worst-case routing overhead in fermion-to-qubit mappings can be reduced from $O(N)$ to $O(\log N)$ by decomposing general fermionic permutations into $O(\log N)$ constant-depth interleaves and using $\Theta(N)$ ancilla qubits and mid-circuit measurements to compress the CZ circuits part of fermi-SWAP networks~\cite{kivlichan_quantum_2018}. We rederive and generalize their results using an alternative construction. Moreover, we show that similar exponential improvements persist even in the ancilla-free setting: any fermionic permutation in the Jordan-Wigner encoding can be implemented in depth $O(\log^2 N)$ without ancillas. We further extend our analysis to arbitrary product-preserving ternary-tree encodings by constructing efficient mappings between those encodings and the Jordan-Wigner encoding in $O(\log^2 N)$ depth. This establishes that fermionic routing can be efficiently performed in a broad class of fermionic encodings. As a result, key primitives such as the fermionic fast Fourier transform (FFFT) and single-Trotter-step material simulations can be executed in polylogarithmic depth using only $\tilde O(N)$ qubits for $N$ fermionic modes.

Our results assume hardware with all-to-all connectivity, but they also imply upper bounds for limited-connectivity devices. In particular, the worst-case overhead for fermionic simulation in constrained architectures is $O(\log^2 N)$ times the worst-case depth required to route qubits on the underlying hardware graph, if that hardware graph has $\Omega(N)$ disjoint edges~\cite{yuan2025full}. For instance, qubit systems with grid connectivity can implement the FFFT in depth $O(\sqrt{N}\log^2 N)$; current neutral-atom architectures achieve $O(\sqrt{N}\log^3 N)$ depth, while upgraded neutral-atom hardware could reach $O(\log^3 N)$ depth~\cite{constantinides_optimal_2024}. This represents an exponential improvement over the prior best-known FFFT algorithm, which achieved $O(N\log N)$ depth~\cite{babbush_low-depth_2018} on a linear geometry, with no mechanism to exploit higher connectivity. The above-cited depths in the presence of hardware constraints is likely too pessimistic as it neglects additional structure in the circuits. In fact, the FFFT is an example where that structure can be exploited, and Maskara et al.~\cite{maskara2025} showed several others.

Finally, these results motivate further study of fermionic routing and simulation on hardware with restricted connectivity, particularly the trade-offs among ancilla count, mid-circuit measurement and feedforward, and depth overhead. Beyond asymptotic scaling, a detailed analysis of constant factors in depth and gate count will be essential for identifying the most practical regimes of advantage. It also remains an open question whether analogous circuit-structural insights could lead to improved boson-to-qubit encodings~\cite{sawaya2020resourceefficient, crane2024hybrid, liu2025hybrid}.

\section*{Acknowledgments}
We thank Maskara et al.\ for sharing their results ahead of posting their manuscript~\cite{maskara2025} on the arXiv.

N.C., J.Y., D.D., A.F., A.S., and A.V.G.\ acknowledge support by  the U.S.~Department of Energy, Office of Science, National Quantum Information Science Research Centers, Quantum Systems Accelerator (QSA). %
A.M.C., N.C., J.Y., D.D., M.J.G., A.F., A.S., and A.V.G.\ acknowledge support by NSF QLCI (award No.~OMA-2120757), DoE ASCR Quantum Testbed Pathfinder program (awards No.~DE-SC0019040 and No.~DE-SC0024220), and the U.S.~Department of Energy, Office of Science, Accelerated Research in Quantum Computing, Fundamental Algorithmic Research toward Quantum Utility (FAR-Qu).
N.C., J.Y., D.D., A.F., A.S., and A.V.G. acknowledge support by ONR MURI, NSF STAQ program, AFOSR MURI, DARPA SAVaNT ADVENT, ARL (W911NF-24-2-0107), and NQVL:QSTD:Pilot:FTL.

\end{document}